\newcounter{hours}
\newcounter{minutes}
\newcommand{\Printtime}{\setcounter{hours}{\time/60}%
\setcounter{minutes}{\time-\value{hours}*60}%
\thehours:%
\ifthenelse{\value{minutes}<10}{0}{}\theminutes}
\def\etal.{et\penalty50\ al.}
\theoremstyle{plain}
\newtheorem{theorem}{Theorem}[section]
\newtheorem{lemma}[theorem]{Lemma}
\newtheorem{corollary}[theorem]{Corollary}
\newtheorem{fact}[theorem]{Fact}
\theoremstyle{definition}
\newtheorem{definition}{Definition}[section]
\theoremstyle{remark}
\theoremstyle{plain}
\newtheorem*{theorem*}{Theorem}
\renewcommand{\epsilon}{\varepsilon}
\newcommand{\N}{\mathbb{N}}
\newcommand{\cA}{\mathcal{A}}
\newcommand{\cG}{\mathcal{G}}
\newcommand{\ceil}[1]{\ceil#1\rceil} 
\newcommand{\tree}{{\tt TREE}}         
\newcommand{\free}{{\tt FREE}}
\newcommand{\inductive}{{\tt INDUCTIVE}}
\newcommand{\planar}{{\tt PLANAR}}
\newcommand{\treewidth}{{\tt TREEWIDTH}}
\newcommand{\trifree}{{\tt TRIANGLE\textnormal{-}FREE}}
\title{Online Coloring and a New Type of Adversary for Online Graph Problems}
\author{Yaqiao Li\\
Universit\'{e} de Montr\'{e}al\\
\href{mailto:yaqiao.li@umontreal.ca}{yaqiao.li@umontreal.ca}
\and 
Vishnu V. Narayan\\
McGill University\\
\href{mailto:vishnu.narayan@mail.mcgill.ca}{vishnu.narayan@mail.mcgill.ca}
\and 
Denis Pankratov\\
Concordia University\\
\href{mailto:denis.pankratov@concordia.ca}{denis.pankratov@concordia.ca}}
\begin{document}

\maketitle

\begin{abstract}
We introduce a new type of adversary for online graph problems. The new adversary is parameterized by a single integer $\kappa$, which upper bounds the number of connected components that the adversary can use at any time during the presentation of the online graph $G$. We call this adversary ``$\kappa$ components bounded'', or $\kappa$-CB for short. On one hand, this adversary is restricted compared to the classical adversary because of the $\kappa$-CB constraint. On the other hand, we seek competitive ratios parameterized \emph{only} by $\kappa$  with no dependence on the input length $n$, thereby giving the new adversary power to use arbitrarily large inputs.

We study online coloring under the $\kappa$-CB adversary. We obtain finer analysis of the existing algorithms $FirstFit$ and $CBIP$ by computing their competitive ratios on trees and bipartite graphs under the new adversary. Surprisingly, $FirstFit$ outperforms $CBIP$ on trees. When it comes to bipartite graphs $FirstFit$ is no longer competitive under the new adversary, while $CBIP$ uses at most $2\kappa$ colors. We also study several well known classes of graphs, such as $3$-colorable, $C_k$-free, $d$-inductive, planar, and bounded treewidth, with respect to online coloring under the $\kappa$-CB adversary. We demonstrate that the extra adversarial power of unbounded input length outweighs the restriction on the number of connected components leading to non existence of competitive algorithms for these classes.
\end{abstract}

\section{Introduction}
\label{sec:intro}
In online graph problems the input graph is not known in advance, but is rather revealed one item at a time. In this paper we are concerned with the so-called vertex arrival model, where the graph is revealed one vertex at a time. When a new vertex is revealed, an online algorithm learns the identity of the vertex as well as  its neighborhood restricted to the already revealed vertices. Note that the algorithm gets no information about  future vertices. Many graph problems do not admit any non-trivial online algorithms in the adversarial vertex-arrival model. Be that as it may, online graph problems often arise in real life applications in computer networks, public transit networks, electrical grids, and so on. Recently, the interest in online and ``online-like''\footnote{Some examples of ``online-like'' models of computation are dynamic graph algorithms, temporal graph algorithms, streaming graph algorithms, priority graph algorithms, and so on.} graph models and algorithms has been increasing since it is being sparked by the proliferation of online social networks. Thus, it is necessary to introduce various restrictions of the basic adversarial model that allow nontrivial algorithms while capturing interesting real-life scenarios.

One obtains a plethora of restricted adversaries simply by insisting that the adversary generates a graph belonging to a particular family of graphs, such as $\chi$-colorable, planar, $d$-inductive, etc. Another way to relax the classical adversarial model is to consider distributions on graphs and perform average-case analysis. One of the most studied distributions is, of course, the Erd{\"o}s-R{\'e}nyi random graph. While it is mathematically appealing, real life graphs rarely follow this distribution. For example, one of the early empirical observations was that  distributions on degrees of nodes in real social networks are most accurately modeled by power-law distributions~\cite{Newman2010}, whereas the Erd{\"o}s-R{\'e}nyi model induces a binomial distribution on degrees of nodes. Thus, new models of random graphs have been introduced in an attempt to approximate power-law distributions on degrees. Many of these new generative models are inherently offline. A notable exception is the preferential attachment model~\cite{BarabasiA1999}, which perfectly fits within the vertex arrival model. The formal definition is technical, but at a high level this model works as follows. When a new vertex $v$ arrives its neighborhood is generated by connecting $v$ to an already existing vertex $u$ with probability proportional to the current degree of $u$. This model has a natural motivation: consider a person signing up for some social network, where people can ``follow'' each other. This person is signing up not because they want to be left alone (i.e. form a new connected component), but because they already have a list of people in mind who they will follow (i.e., join existing connected component(s), potentially merging some components together). It is more likely that the person is going to follow well known people, e.g. celebrities, who in turn have some of the highest numbers of followers in the network. This is akin to a new node in vertex arrival model likely being connected to existing nodes of high degree. 

The starting point of our work is the observation that when a social network graph is generated via the preferential attachment process, there are very few connected components in the online graph at any point in time. Formalizing this observation in the adversarial setting, we investigate a new type of adversary that is restricted to using at most $\kappa$ connected components at any point in time during the generation of the online input graph. We call such adversary $\kappa$ components bounded, or $\kappa$-CB for short.


In this paper we focus on the online coloring problem under the $\kappa$-CB adversary. Indeed, another motivation for considering the $\kappa$-CB adversary is to extend our understanding of lower bound techniques for online coloring.  Most of the past research uses the following methodology: the adversary creates a collection of disjoint components with some properties, then the adversary merges some of these components by creating a vertex appropriately connected to the components. The aim of this technique is to allow the adversary to observe the coloring of each component chosen by the algorithm, and then choose a ``correct'' coloring of the components that differs from the one chosen by the algorithm. The adversary then connects the components together, forcing the algorithm to use extra colors (since the algorithm's coloring is incorrect inside at least one component). By iterating this process, the adversary tries to force the online algorithm to perform badly. Some variant of this technique has been used, for example, in \cite{gyarfas1988line,vishwanathan1990randomized,bip1,bip2,albers2017tight}.  A notable exception is \cite{halldorsson1992lower}, where this create-and-merge components technique is not directly involved. Usually, this type of construction involves a large number of disjoint components, typically logarithmic in the number of vertices -- see, for example, \cite{gyarfas1988line, bip2}. Our goal is to formally analyze the power of this technique and the extent of dependence of existing lower bounds on this technique. Specifically, we ask, what happens if the adversary in the online coloring problem is $\kappa$-CB? In this work we investigate this question, while allowing the adversary to use an unlimited number of vertices to compensate for a limited number of components.

Our first set of results gives a finer understanding of the $FirstFit$ and $CBIP$ algorithms (for formal definitions see Section~\ref{sec:prelim}), which are well known in the graph coloring community. We show that, perhaps surprisingly, $FirstFit$ outperforms $CBIP$ on trees with respect to the $\kappa$-CB adversary. For general bipartite graphs, we show that $CBIP$ uses at most $2\kappa$ colors against the $\kappa$-CB adversary. This result is particularly interesting in the context of existing lower bounds on the performance of $CBIP$ on bipartite graphs. In a series of works \cite{gyarfas1988line, bip1, bip2} it is shown that any online algorithm must use at least roughly $2\log n$ colors where $n$ is the number of vertices. The construction for this lower bound uses $\log n$ disjoint components. Our result shows that this is necessary. One often measures the performance of an online algorithm by its competitive ratio -- the worst-case ratio between the objective value achieved by an algorithm and the offline optimum. In the case of nontrivial bipartite graphs offline optimum is $2$, thus the difference between the absolute number of colors used by $CBIP$ and its competitive ratio is just a factor of $2$. But this difference has a philosophical significance:  our result shows that the competitive ratio of $CBIP$ on bipartite graphs is simply $\kappa$: the number of components that the adversary is allowed to use.

Our second set of results shows that for several classes of graphs, including $\chi$-colorable graphs,  the $\kappa$-CB adversary equipped with unlimited number of vertices is powerful enough to rule out competitive algorithms even when $\kappa = 1$. These two sets of results provide another contrast between bipartite graphs and other classes of graphs. 

The rest of the paper is organized as follows. In Section~\ref{sec:prelim} we go over some preliminaries. The new adversarial model is introduced in Section~\ref{sec:newmodel}. The $FirstFit$ algorithm is analyzed in Section~\ref{sec:firstfit}, while $CBIP$ is analyzed in Section~\ref{sec:CBIP}. The analysis of various classes of graphs can be found in Section~\ref{sec:classes}. We finish with some discussion and open problems in Section~\ref{sec:conclusion}.

\section{Preliminaries}
\label{sec:prelim}

In online coloring, an adversary creates a simple undirected graph\footnote{We will always use $n$ to refer to $|V|$ and $m$ to refer to $|E|$.} $G=(V,E)$ and a presentation order of vertices\footnote{Notation $[n]$ stands for $\{1, 2, \ldots, n\}$. More generally, notation $[k,n]$ stands for $\{k, k+1, \ldots, n\}$} $\sigma : [n] \rightarrow V$. The graph is then presented online in the \emph{vertex arrival model}: at time $i$ vertex $v=\sigma(i)$ arrives, and we learn all its neighbors among already appearing vertices. An online algorithm must declare how to color the new vertex $c(v)$ prior to the arrival of the next vertex $\sigma(i+1)$. A priori, an online algorithm does not know $V$ or even $n$. Alternatively, we can view the online input as a sequence of induced subgraphs:
\[ G \cap \sigma([1]), G \cap \sigma([2]), \ldots, G \cap \sigma([n]).\]
We call the set of neighbors of $v$ that an online algorithm learns about at the time of arrival of $v$ as the \emph{pre-neighborhood} of $v$, denoted by $N^-(v)$.

A natural greedy algorithm is called $FirstFit$ (see, for example,~\cite{gyarfas1988line,kierstead1998survey,Steffen2014}): when a vertex $v$ arrives, $FirstFit$ colors it with the first color that does not appear in the pre-neighbrhood of $v$. The pseudocode is shown in Algorithm~\ref{algo:ff}.

\begin{algorithm}[!h]
\caption{The $FirstFit$ algorithm for online coloring.}\label{algo:ff}
\begin{algorithmic}
\Procedure{$FirstFit$}{}
\State{$i \gets 1$}
\State{$c \gets \emptyset$}\Comment{map that stores the coloring}
\While{$i \le n$}
\State{A new vertex $v = \sigma(i)$ arrives with its pre-neighborhood $N^-(v)$}
\State{$c(v) \gets \min \left(\mathbb{N} \setminus c(N^-(v))\right)$}
\State{$i \gets i+1$}
\EndWhile
\Return{$c$}
\EndProcedure
\end{algorithmic}
\end{algorithm}

Another famous algorithm due to~\cite{lovasz1989line} for online coloring of \emph{bipartite graphs} is called $CBIP$: when a vertex $v=\sigma(i)$ arrives, $CBIP$ computes an entire connected component $CC$ to which $v$ belongs in the partial graph known so far. Since we assume that the input graph is bipartite, the connected component $CC$ can be partitioned into two sets of vertices $A$ and $B$ such that all edges go between $A$ and $B$ only. Suppose that $v \in A$, then $v$ is colored with the first color that is not present among vertices in $B$. The pseudocode is shown in Algorithm~\ref{algo:cbip}.

\begin{algorithm}[!h]
\caption{The $CBIP$ algorithm for online coloring of bipartite graphs.}\label{algo:cbip}
\begin{algorithmic}
\Procedure{$CBIP$}{}
\State{$i \gets 1$}
\State{$c \gets \emptyset$}\Comment{map that stores the coloring}
\While{$i \le n$}
\State{A new vertex $v = \sigma(i)$ arrives with its pre-neighborhood $N^-(v)$}
\State{$CC \gets$ connected component of $v$ in $G \cap \sigma([i])$}
\State{Partition vertices of $CC$ into $A$ and $B$ such that all edges go between $A$ and $B$ only and $v \in A$}
\State{$c(v) \gets \min \left(\mathbb{N} \setminus c(B)\right)$}
\State{$i \gets i+1$}
\EndWhile
\Return{$c$}
\EndProcedure
\end{algorithmic}
\end{algorithm}

In the adversarial arguments presented in this work we often need to control the chromatic number of constructed instances. These instances can get quite complicated and computing their chromatic number exactly might be rather difficult. The following technique is widely used in the online coloring community, see e.g.,  \cite{halldorsson1992lower}. The adversary is not only going to construct an online instance, but it will also maintain a valid coloring of that instance. Thus, when specifying the adversary we need to define not only how the next input item is generated, but also how it is colored. The key idea is that since the adversary knows and controls how future input items will be generated, it can anticipate its own moves and create a much better coloring than what an online algorithm can achieve without this knowledge. Unless explicitly stated otherwise, by an ``algorithm'' we always mean a \emph{deterministic algorithm}.

\subsection{Bins vs. Colors}    \label{sec:term}
We adopt the terminology introduced in \cite{halldorsson1992lower}: when there is a possibility of ambiguity we say that an online algorithm colors with \emph{bins} and the adversary with \emph{colors} in order to distinguish the two. Let $v$ be a vertex. We use the notation $b(v)$ to denote the bin that is assigned to $v$ by an online algorithm, and $c(v)$ to denote the color that is assigned to $v$ by the adversary. The functions $b$ and $c$ naturally extend to be defined over sets of vertices. Let $A$ be a set of vertices. Define $b(A) = \{b(v) : v\in A\}$ and $c(A) = \{c(v) : v\in A\}$. Sometimes, we say that bin $b$ contains $v$ to mean that $b(v)=b$. 


\subsection{Saturated Bins}  \label{sec:saturate}
We define a notion that is inspired by several previous works \cite{vishwanathan1990randomized,halldorsson1992lower,bip2}. 

\begin{definition}  \label{def:saturated}
Suppose that the adversary is constructing a $\chi$-colorable graph. A bin $b$ is said to be \emph{$p$-saturated} if there are  $p$ vertices  $v_1, \ldots, v_p$  such that 
\begin{equation}   \label{eq:saturate}
b(v_1) = \cdots = b(v_p) = b; 
\quad
|\{c(v_1), c(v_2), \ldots, c(v_p) \} | = p.
\end{equation}
A bin $b$ is said to be \emph{perfectly $p$-saturated} if bin $b$ is $p$-saturated and it contains exactly $p$ vertices. When a bin $b$ is \emph{$\chi$-saturated}, we simply say bin $b$ is \emph{saturated}. 
\end{definition}

The following simple fact demonstrates why this notion might be interesting. 

\begin{fact}\label{fact:saturation} If $t$ bins are all saturated, then every color class contains $t$ vertices in distinct bins. By connecting a new vertex to these $t$ vertices, the algorithm is forced to use a new bin.
\end{fact}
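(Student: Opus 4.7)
The plan is to peel off the definitions and then apply a small pigeonhole argument. The adversary will then be handed a ready-made construction that forces the algorithm to open a fresh bin.

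First I would fix the $t$ saturated bins $b_1, \ldots, b_t$ and unpack what $\chi$-saturation means for each of them. By Definition 2.1, each $b_j$ contains $\chi$ vertices whose adversary-assigned colors span $\chi$ distinct values. Since the adversary uses only $\chi$ colors on a $\chi$-colorable instance, those $\chi$ distinct values must be exactly the $\chi$ possible colors. Hence every color class $k \in [\chi]$ contributes at least one vertex $u_j^{(k)}$ to every bin $b_j$. Collecting $u_1^{(k)}, \ldots, u_t^{(k)}$ yields $t$ color-$k$ vertices lying in $t$ distinct bins, which is the first assertion.

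For the second assertion, I would have the adversary introduce a new vertex $v$ adjacent precisely to $u_1^{(k)}, \ldots, u_t^{(k)}$ for a single fixed color $k$. All pre-neighbors of $v$ share color $k$, so the adversary can legitimately assign $v$ any color in $[\chi] \setminus \{k\}$, keeping the extended coloring valid (assuming $\chi \geq 2$; the case $\chi = 1$ is vacuous since then no edges are allowed). From the algorithm's vantage point, these pre-neighbors occupy $t$ distinct bins $b_1, \ldots, b_t$, so the bin-consistency constraint forbids placing $v$ into any of them, forcing the algorithm to open a bin outside $\{b_1, \ldots, b_t\}$.

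There is no serious obstacle here: the argument is essentially a definition chase combined with the elementary observation that if $\chi$ distinct colors appear among $\chi$ vertices drawn from a $\chi$-element palette, then each color appears exactly once. The only subtlety worth flagging is the consistency of the adversary's extension of the coloring, which is automatic because all pre-neighbors of $v$ carry a single common color, leaving $\chi - 1 \geq 1$ admissible colors for $v$ itself.
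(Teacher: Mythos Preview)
Your argument is correct and is precisely the definition-chase the paper has in mind; the paper states this as a Fact without proof, treating it as immediate from Definition~\ref{def:saturated}. Your added remark about the adversary's ability to extend the coloring (since all pre-neighbors share a single color) is a useful sanity check that the paper leaves implicit.
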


The notion of saturated bins is already implicit in some previous works. For example, the so-called \emph{two-sided colors}\footnote{What is called by ``colors'' in \cite{bip2}, as in many other works, is what we call bins.} in \cite{bip2} are saturated bins when $\chi=2$. The construction in \cite{bip2} forces many two-sided colors, i.e., many saturated bins. The 
proof 
of a lower bound in \cite{halldorsson1992lower}, which we mentioned earlier as an example that does \emph{not} use the common create-and-merge components strategy directly, could be summarized as follows. The adversary has a strategy to force any algorithm to use perfectly $p$-saturated bins for $1 \le p \le \chi /2$. In \cite{vishwanathan1990randomized}, the lower bound construction does not necessarily force $p$-saturated bins, but seeks to create a situation where a weaker form of the Fact~\ref{fact:saturation} is bound to appear.


In Section~\ref{sec:firstfit}, we show that $\kappa$-CB adversary can successively force saturated bins on $FirstFit$ for $\chi$-colorable graphs. This leads to the algorithm being noncompetitive. Explicit examples of saturated bins can be seen in Figure~\ref{fig:thm-FF} in Section~\ref{sec:firstfit}. The construction of forcing saturated bins on $FirstFit$ is generalized in Section~\ref{sec:classes} to work for all algorithms.


\section{A New Type of Adversary}   \label{sec:newmodel}
Let $cc(G)$ denote the number of connected components of graph $G$. 

\begin{definition}  \label{def:beta}
An adversary is said to be \emph{$\kappa$ components bounded}, or $\kappa$-CB, if the input graph $G$ and the presentation order $\sigma$ satisfy
\begin{equation}  \label{eq:condition-kCB}
\forall\ i \in [n] \:\:\: cc(G\cap \sigma([i])) \le \kappa.
\end{equation}

Let $\cA$ denote a deterministic online coloring algorithm. Define $\beta(\cA,\kappa, G)$ to be the maximal number of bins $\cA$ has to use when a $\kappa$-CB adversary constructs the graph $G$. Let $\cG$ denote a class of graphs. Define
\begin{equation}   \label{eq:def-beta-A-kappa-G}
\beta(\cA, \kappa, \cG) = \sup_{G \in \cG} \beta(\cA, \kappa, G),
\end{equation}
and
\begin{equation}   \label{eq:def-beta-kappa-G}
\beta(\kappa, \cG) = \inf_{\cA} \beta(\cA, \kappa, \cG).
\end{equation}
\end{definition}

Let $\chi \in \N$. By identifying $\chi$ with the class of graphs that are $\chi$-colorable, the notation $\beta(\cA,\kappa,\chi)$ and $\beta(\kappa,\chi)$ are defined via \eqref{eq:def-beta-A-kappa-G} and \eqref{eq:def-beta-kappa-G}, respectively.  Let \tree\ denote the class of graphs that are trees. 

Different from traditional online coloring models, a feature of the $\kappa$-CB adversary model is that  the number of vertices of a graph is \emph{not necessarily}  a parameter in the model. In this work, the $\kappa$-CB adversary is allowed to construct graphs with arbitrarily many vertices. We will be interested in understanding what the power and limitations are for an adversary who  can use unlimited number of vertices but is $\kappa$-CB.

\begin{lemma}  \label{lem:infty}
Let $\Omega$ denote the set of all possible graphs and let $\cA$ be an arbitrary algorithm. Then, 
\[\beta(\cA, 1, \Omega) = \infty.\]
\end{lemma}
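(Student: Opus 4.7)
The plan is to exhibit, for every $n \in \N$, a graph in $\Omega$ together with a 1-CB presentation order that forces any deterministic $\cA$ to use at least $n$ bins. The natural candidate is the complete graph $K_n$, since any proper coloring of a clique requires $n$ distinct colors and $K_n$ stays connected under every presentation order, making the 1-CB constraint automatic.

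Concretely, I would fix an arbitrary bijection $\sigma : [n] \to V(K_n)$ as the presentation order. For every $i \in [n]$ the induced subgraph $K_n \cap \sigma([i])$ is itself a complete graph on $i$ vertices, hence connected, so $cc(K_n \cap \sigma([i])) \le 1$ and $\sigma$ is a valid 1-CB presentation. The adversary need not even bother maintaining its own coloring here, since $K_n$ is obviously $n$-colorable and no smaller. Because the vertices revealed so far always form a clique, when $\sigma(i)$ arrives its pre-neighborhood $N^-(\sigma(i)) = \{\sigma(1), \ldots, \sigma(i-1)\}$ already occupies $i-1$ pairwise distinct bins, so $\cA$ is forced to introduce a brand new bin. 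Hence after all $n$ vertices have arrived, $\cA$ has used exactly $n$ bins, giving $\beta(\cA, 1, K_n) \ge n$.

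Plugging this into the definition $\beta(\cA, 1, \Omega) = \sup_{G \in \Omega} \beta(\cA, 1, G)$ yields $\beta(\cA, 1, \Omega) \ge n$ for every $n \in \N$, which is the claim. There is no real obstacle to this argument; the purpose of the lemma is conceptual rather than technical, showing that without any graph-class restriction the 1-CB adversary retains unbounded power by exploiting cliques. This is precisely what motivates the restrictions to classes such as \tree\ or bipartite graphs in the subsequent sections, where the presence of large cliques is forbidden and the $\kappa$-CB constraint becomes genuinely meaningful.
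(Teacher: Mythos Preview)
Your proof is correct and takes essentially the same approach as the paper: present $K_n$ under any order, observe that every prefix is a clique (hence connected, satisfying 1-CB), and conclude $\beta(\cA,1,K_n)\ge n$ for all $n$. The paper's proof is in fact terser than yours, omitting the explicit pre-neighborhood argument and simply stating the bound as obvious.
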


\begin{proof}
For every $n\in\N$, let $K_n$ denote the complete graph on $n$ vertices. Obviously, an adversary can present $K_n$ in any presentation order while maintaining a single connected component. Thus, $\beta(\cA,1,K_n) \ge n$. As $n$ is arbitrary, the result follows.
\end{proof}

Hence, the $\kappa$-CB adversary model becomes interesting when we consider special classes of graphs. For example, $\beta(FirstFit, \kappa, 2)$ denotes the maximal number of bins that the $\kappa$-CB adversary can force $FirstFit$ (see Algorithm \ref{algo:ff}) to use by constructing a bipartite graph.

\section{$FirstFit$ on $\chi$-colorable Graphs, Triangle-Free Graphs, and Trees}
\label{sec:firstfit}

In this section, we completely characterize the performance of $FirstFit$ on $\chi$-colorable graphs, triangle-free graphs, and trees for $\kappa$-CB adversaries. We begin with the following theorem, which completely determines $\beta(FirstFit, \kappa, \chi)$ for all $\kappa, \chi \in \N$.

\begin{theorem}  \label{thm:FF-bipartite}
Let $\kappa \in \N, \chi \in \N$.
\begin{enumerate}
    \item[(1)] $\beta(FirstFit, \kappa, 1) = 1$ for every $\kappa \ge 1$; 
    \item[(2)] $\beta(FirstFit, 1, 2) = 2$;
    \item[(3)] $\beta(FirstFit,2,2)= \infty$
    \item[(4)] $\beta(FirstFit, \kappa, \chi) = \infty$ for every $\kappa \ge 1$ and $\chi \ge 3$.
\end{enumerate}
\end{theorem}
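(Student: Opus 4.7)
Parts (1) and (2) are structural. Part (1) is immediate: a $1$-colorable graph has no edges, so every vertex arrives with empty pre-neighborhood and $FirstFit$ outputs bin $1$. For part (2) the lower bound is achieved by a single edge, so the content is the upper bound. I plan to argue by induction on $i$, maintaining the invariant that, writing the bipartition as $V(G) = A \sqcup B$ with $\sigma(1) \in A$, every $A$-vertex of $\sigma([i])$ is in bin $1$ and every $B$-vertex is in bin $2$. The $1$-CB condition forces $v := \sigma(i+1)$ to have a nonempty pre-neighborhood for $i \ge 1$ (else a second component appears), and bipartiteness forces this pre-neighborhood to lie on the side opposite to $v$; by the inductive hypothesis those pre-neighbors share a single bin, so $FirstFit$ is pinned to the correct bin.

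For (3), the plan is to exhibit, for each $k$, a single bipartite instance forcing $FirstFit$ to use $k$ bins under a $2$-CB presentation. Let $B_k := K_{k,k}$ minus a perfect matching, with parts $X = \{x_1, \dots, x_k\}$ and $Y = \{y_1, \dots, y_k\}$ and edges $x_i y_j$ for $i \ne j$, presented in the order $x_1, y_1, x_2, y_2, \dots, x_k, y_k$. An induction on $i$ shows that the pre-neighborhood of $x_i$ is exactly $\{y_1, \dots, y_{i-1}\}$ with bins $\{1, \dots, i-1\}$, forcing $x_i$ into bin $i$, and symmetrically for $y_i$. I would then check, step by step over the first few arrivals, that the component count peaks at $2$ (reached once $y_1$ starts a second component and persisting until $x_3$ merges the two via its pre-neighbors $y_1$ and $y_2$ at time $5$), after which the graph is connected and every subsequent vertex has all its pre-neighbors in that single component. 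Each bin $i$ thus becomes $2$-saturated in the sense of Definition~\ref{def:saturated}.

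For (4), monotonicity of $\beta$ in both $\kappa$ and $\chi$ reduces the statement to the single case $\kappa = 1, \chi = 3$. The idea is to adapt the $2$-CB instance from (3) into a $1$-CB one by gluing in a single extra vertex, at the cost of one chromatic color. Let $G_k$ be $B_k$ augmented by a new vertex $z$ and the two edges $x_1 z$ and $z y_1$, presented in the order $x_1, z, y_1, x_2, y_2, \dots, x_k, y_k$. Then $z$ receives bin $2$, $y_1$ receives bin $1$ (its only pre-neighbor is $z$), and from that point the analysis of (3) carries over verbatim since $z$ is not adjacent to any later vertex; so $FirstFit$ forces $x_i$ and $y_i$ into bin $i$ for every $i$. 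Every arriving vertex after the first has a pre-neighbor, so the graph stays connected throughout, and $\chi(G_k) = 3$ is witnessed by the proper $3$-coloring $x_i \mapsto 1$, $y_j \mapsto 2$, $z \mapsto 3$ (one rules out a $2$-coloring because $z$ forces $x_1$ and $y_1$ to share a color, which then forces $y_1$ and $y_2$ to differ, while $x_3$ is adjacent to both). The only nontrivial choice in the whole theorem is picking these two presentation orders; once fixed, everything else is routine bookkeeping of bins and of components.
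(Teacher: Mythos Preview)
Your proofs of (1)--(3) are correct and essentially identical to the paper's; in particular, the paper's graph in (3) is exactly $K_{k,k}$ minus a perfect matching, presented in the same interleaved order $x_1,y_1,x_2,y_2,\dots$.

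For (4) you take a genuinely different and more economical route. The paper builds a separate $3$-colorable instance from scratch: an initial path of six vertices followed by an inductive phase in which each round introduces three new vertices, one per color class, each adjacent to all earlier vertices of a fixed other color class; this successively creates perfectly $3$-saturated bins in the sense of Definition~\ref{def:saturated}. Your construction instead recycles the bipartite instance $B_k$ from (3) and splices in a single auxiliary vertex $z$ on the path $x_1\text{--}z\text{--}y_1$, which simultaneously makes the presentation $1$-CB and bumps the chromatic number to~$3$ while leaving the $FirstFit$ behavior on all later vertices unchanged. Both constructions are in fact triangle-free (in yours, any triangle would have to use $z$, hence the non-edge $x_1y_1$). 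What the paper's version buys is that it fits the saturated-bins template that is later generalized in Section~\ref{sec:classes} to arbitrary algorithms; your trick is specific to $FirstFit$ but gets the result with almost no new bookkeeping. One cosmetic point: your claim $\chi(G_k)=3$ holds only for $k\ge 3$ (for $k\le 2$ the graph is still bipartite), but since the argument only needs $\chi(G_k)\le 3$ this is harmless.
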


\begin{proof} \ 

\begin{enumerate}
\item[(1)] The graphs that are $1$-colorable and can be presented by a $\kappa$-CB adversary consist of up to $\kappa$ isolated vertices. Clearly, $FirstFit$ uses a single bin on such graphs.

\item[(2)] A simple induction shows that $FirstFit$ maintains a valid $2$-coloring when a bipartite graph is revealed by a $1$-CB adversary. Base case of a single vertex is trivial. In the inductive step, the newly arriving vertex has edges going to the ``opposite'' side of the bipartition (due to the $1$-CB restriction). By the inductive assumption, those neighbors have been assigned to a single bin consistent with a valid $2$-coloring, $FirstFit$ correctly identifies the other bin for the new vertex.

\item[(3)] Let $n$ be even and consider the graph $G = (V,E)$ with the vertex set $V = \{v_i \mid i \in [n]\}$ and the edge set defined by connecting each vertex $v_{2k-1}$ with $v_{2k'}$ for $k, k' \in [n/2]$ and $k \neq k'$.  Since all edges go in between odd-indexed and even-indexed vertices, the graph is clearly bipartite. The adversary presents vertices in order $v_1, v_2, v_3, \ldots, v_n$. This presentation order satisfies the $2$-CB constraint: $v_1$ is initially in one connected component, when $v_2$ arrives it is isolated and forms the second component, and every future vertex has an edge either to $v_1$ or $v_2$. An example of this graph for $n = 8$ is shown in  Figure~\ref{fig:thm-FF-3}. 

Let $b(v)$ denote the bin that $FirstFit$ assigns $v$ to. We show by induction on $k \in [n/2]$ that $b(v_{2k-1}) = b(v_{2k}) = k$. The base case of $k = 1$ is trivial since when $v_1$ and $v_2$ arrive they are isolated nodes, so they are placed in bin $1$ by $FirstFit$. In the inductive step, $v_{2k-1}$ is connected to $v_{2k'}$ for $k' \in [k-1]$. By induction, $b(v_{2k'}) = k'$, therefore $FirstFit$ assigns $v_{2k-1}$ to a new bin $k$. A similar argument holds for the next arriving vertex $v_{2k}$.

As $n$ can be arbitrary large, the result follows.


\item[(4)] Since $\beta(FirstFit, \kappa, \chi)$ is non-decreasing with respect to both $\kappa$ and $\chi$, it suffices to show that $\beta(FirstFit, 1, 3) = \infty$. Let $n$ be a multiple of $3$ and consider the graph $G=(V,E)$ with the vertex set $V = \{v_i \mid i \in [n]\}$. The adversary presents vertices in order $v_1, \ldots, v_n$. The edge set is defined by the following construction. As usual, we let $c(v)$ denote the color that the adversary maintains for vertex $v$, while $b(v)$ denotes the bin used by $FirstFit$.

The construction consists of two phases: the initial phase and the inductive phase. During the initial phase, the adversary presents a path of $6$ vertices: $v_1, v_2, v_3, v_4, v_5, v_6$. Clearly, when $v_1$ arrives it is an isolated vertex, but each subsequent vertex $v_i$ is revealed with a single edge to $v_{i-1}$. This clearly satisfies the $1$-CB constraint. The adversary assigns colors $c(v_1) = c(v_4) = \text{red}$, $c(v_2) = c(v_5) = \text{green}$, and $c(v_3)=c(v_6)=\text{blue}$. This is a valid $3$-coloring (although $2$ colors are sufficient, the adversary uses more colors in anticipation of the inductive phase). $FirstFit$ assigns bins $b(v_1) = b(v_3)=b(v_5)=1$ and $b(v_2)=b(v_4)=b(v_6)=2$. The result of the initial phase is that $FirstFit$ ends up with $2$ bins that are $3$-saturated.

The inductive phase proceeds in rounds. In round $k \in [3, n/3]$, the vertex $v_{3k-2}$ is revealed and its pre-neighborhood consists of vertices $v_{3k'-1}$ for $k' \in [k-1]$. Then the vertex $v_{3k-1}$ is revealed and its pre-neighborhood consists of vertices $v_{3k'}$ for $k' \in [k-1]$. Lastly, the vertex $v_{3k}$ is revealed and its pre-neighborhood consists of vertices $v_{3k'-2}$ for $k' \in [k-1]$. The adversary assigns $c(v_{3k-2}) = \text{red}, c(v_{3k-1})=\text{green},$ and $c(v_{3k})=\text{blue}$. By a straightforward induction, $FirstFit$ assigns $b(v_{3k-2})=b(v_{3k-1})=b(v_{3k})=k$: prior to round $k$, $FirstFit$ has $k-1$ bins that are $3$-saturated; during round $k$, $FirstFit$ creates a new bin and places all three new vertices into that bin making it $3$-saturated. The initial phase described in the previous paragraph establishes the base case of the induction.

The coloring maintained by the adversary is easily seen to be valid, since the color classes consist of vertices whose indices have the same remainder $\bmod\;3$, while the edges are present only between two vertices whose indices have different  remainders $\bmod\;3$. The $1$-CB constraint is clearly maintained during the inductive phase.

An example of this construction for $n = 12$ is shown in Figure~\ref{fig:thm-FF-4}. 
\end{enumerate}
\end{proof}

\begin{figure}[ht!]
\centering
\begin{subfigure}{.48\textwidth}
  \centering
\begin{tikzpicture}
\draw [red,fill] (-1,0) circle [radius=0.1];
\draw [red,fill] (-1,1) circle [radius=0.1];
\draw [red,fill] (-1,2) circle [radius=0.1];
\draw [red,fill] (-1,3) circle [radius=0.1];

\draw [green,fill] (1,0) circle [radius=0.1];
\draw [green,fill] (1,1) circle [radius=0.1];
\draw [green,fill] (1,2) circle [radius=0.1];
\draw [green,fill] (1,3) circle [radius=0.1];

\draw (-1,0) -- (1,1);
\draw (-1,0) -- (1,2);
\draw (-1,0) -- (1,3);

\draw (-1,1) -- (1,0);
\draw (-1,1) -- (1,2);
\draw (-1,1) -- (1,3);

\draw (-1,2) -- (1,0);
\draw (-1,2) -- (1,1);
\draw (-1,2) -- (1,3);

\draw (-1,3) -- (1,0);
\draw (-1,3) -- (1,1);
\draw (-1,3) -- (1,2);

\node [left] at (-2,0) {$b=1$};
\node [left] at (-2,1) {$b=2$};
\node [left] at (-2,2) {$b=3$};
\node [left] at (-2,3) {$b=4$};

\node [left] at (-1,0) {$v_1$};
\node [right] at (1,0) {$v_2$};
\node [left] at (-1,1) {$v_3$};
\node [right] at (1,1) {$v_4$};
\node [left] at (-1,2) {$v_5$};
\node [right] at (1,2) {$v_6$};
\node [left] at (-1,3) {$v_7$};
\node [right] at (1,3) {$v_8$};

\node at (-1,4) {$c=$ red};
\node at (1,3.95) {$c=$ green};
\end{tikzpicture}
  \caption{An example of the adversarial graph with $n=8$ that is used in the proof of part (3).}
  \label{fig:thm-FF-3}
\end{subfigure}%
\hfill
\begin{subfigure}{.48\textwidth}
  \centering
 \begin{tikzpicture}
\draw [red,fill] (-2,0) circle [radius=0.1];
\draw [red,fill] (-2,1) circle [radius=0.1];
\draw [red,fill] (-2,2) circle [radius=0.1];
\draw [red,fill] (-2,3) circle [radius=0.1];

\draw [green,fill] (0,0) circle [radius=0.1];
\draw [green,fill] (0,1) circle [radius=0.1];
\draw [green,fill] (0,2) circle [radius=0.1];
\draw [green,fill] (0,3) circle [radius=0.1];

\draw [blue,fill] (2,0) circle [radius=0.1];
\draw [blue,fill] (2,1) circle [radius=0.1];
\draw [blue,fill] (2,2) circle [radius=0.1];
\draw [blue,fill] (2,3) circle [radius=0.1];

\node [left] at (-3,0) {$b=1$};
\node [left] at (-3,1) {$b=2$};
\node [left] at (-3,2) {$b=3$};
\node [left] at (-3,3) {$b=4$};

\node [left] at (-2,0) {$v_1$};
\node [left] at (0,1) {$v_2$};
\node [right] at (2,0) {$v_3$};
\node [left] at (-2,1) {$v_4$};
\node [left] at (0,0) {$v_5$};
\node [right] at (2,1) {$v_6$};
\node [left] at (-2,2) {$v_7$};
\node [left] at (0,2) {$v_8$};
\node [right] at (2,2) {$v_9$};
\node [left] at (-2,3) {$v_{10}$};
\node [left] at (0,3) {$v_{11}$};
\node [right] at (2,3) {$v_{12}$};

\node at (-2,4) {$c=$ red};
\node at (0,3.95) {$c=$ green};
\node at (2,4) {$c=$ blue};

\draw (0,1) -- (-2,0);
\draw (2,0) -- (0,1);
\draw (-2,1) -- (2,0);
\draw (0,0) -- (-2,1);
\draw (2,1) -- (0,0);

\draw (-2,2) -- (0,1);       
\draw (-2,2) -- (0,0);

\draw (0,2) -- (2,1);    
\draw (0,2) -- (2,0);

\draw (2,2) -- (-2,1);    
\draw (2,2) to [out=240,in=20] (-2,0);

\draw (-2,3) -- (0,2);   
\draw (-2,3) -- (0,1);
\draw (-2,3) -- (0,0);

\draw (0,3) -- (2,2);   
\draw (0,3) -- (2,1);
\draw (0,3) -- (2,0);

\draw (2,3) -- (-2,2);   
\draw (2,3)  to [out=240,in=20] (-2,1);
\draw (2,3) -- (-2,0);
\end{tikzpicture}
  \caption{An example of the adversarial graph with $n=12$ that is used in the proof of part (4).}
  \label{fig:thm-FF-4}
\end{subfigure}
\caption{Examples of adversarial inputs used in the proof of Theorem~\ref{thm:FF-bipartite}. Columns indicate the coloring maintained by the adversary, while rows indicate the bins used by $FirstFit$.}
\label{fig:thm-FF}
\end{figure}
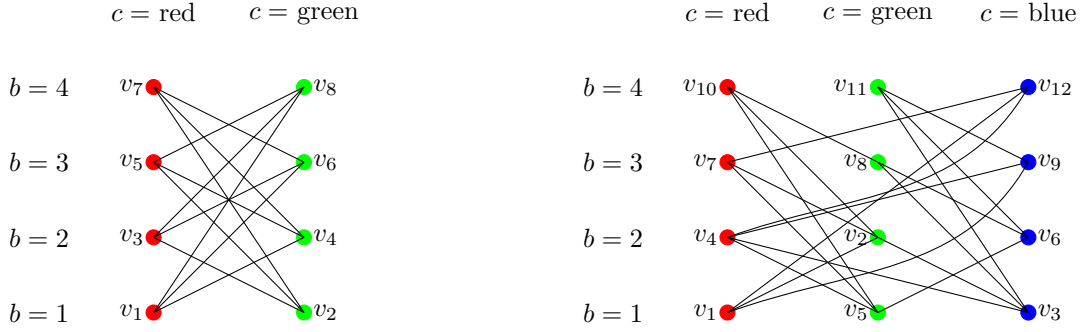

Observe that the construction in part (4) of Theorem \ref{thm:FF-bipartite} results in a triangle-free graph. Let $\trifree$ denote the class of triangle-free graphs. Thus, we immediately obtain the following.
\begin{corollary}  \label{cor:FF}
$\beta(FirstFit, 1, \trifree) = \infty$. 
\end{corollary}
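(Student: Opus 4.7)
The plan is to observe that the $1$-CB construction already used in part~(4) of Theorem~\ref{thm:FF-bipartite} is itself triangle-free, so the same instance---which forces $FirstFit$ to use $n/3$ bins for $n$ an arbitrarily large multiple of $3$---simultaneously witnesses $\beta(FirstFit, 1, \trifree) = \infty$. The only work is to verify triangle-freeness of that construction.

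To do so, I would exploit the adversary's proper $3$-coloring in which each $v_i$ receives color $i \bmod 3$, so every edge of the graph joins vertices of distinct residues. Any putative triangle therefore uses three vertices whose indices realize all three residues modulo~$3$. Let $v_c$ be the triangle-vertex of smallest index; the other two triangle-vertices arrive strictly later than $v_c$. The inductive rule connects a later vertex of residue $1$ only to earlier vertices of residue $2$, a later residue-$2$ vertex only to earlier residue-$0$ vertices, and a later residue-$0$ vertex only to earlier residue-$1$ vertices. Equivalently, every inductive edge realizes the cyclic map $r \mapsto r-1 \pmod{3}$ from earlier residue to later residue, so every inductive forward-neighbor of $v_c$ carries the single residue $c-1 \pmod{3}$.

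With this setup, the case analysis is short. If $c \geq 7$ then every forward edge at $v_c$ is inductive, and both remaining triangle-vertices would be forced into residue $c-1 \pmod{3}$, contradicting the three-residue requirement. If $c \leq 6$ the only extra forward-neighbor of $v_c$ is $v_{c+1}$ on the initial path, of residue $c+1 \pmod{3}$, so the triangle must take one vertex to be $v_{c+1}$ and the other to be an inductive forward-neighbor $v_a$ of $v_c$ with $a \geq 7$. Closing the triangle would require the edge $(v_{c+1}, v_a)$, but this is neither a path edge (since $a \geq 7$ lies outside $\{v_1,\ldots,v_6\}$) nor an inductive edge (since the residue pair earlier $= c+1$, later $= c-1 \pmod{3}$ violates the cyclic $r \mapsto r-1$ rule). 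Hence no triangle exists. The one subtle point is keeping the residue arithmetic mod~$3$ straight; everything else reduces to this two-case dichotomy between initial-path and inductive-phase vertices.
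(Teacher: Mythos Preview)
Your approach is the same as the paper's: both simply note that the $1$-CB construction from part~(4) of Theorem~\ref{thm:FF-bipartite} is triangle-free, so the corollary follows immediately. The paper states this as a one-line observation without proof; you go further and actually verify triangle-freeness via the residue argument, which is a welcome addition.

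Your verification is correct in substance, but the case boundary is off by one. The initial path has edges only among $v_1,\ldots,v_6$, so $v_6$ has \emph{no} path forward-neighbor (there is no edge $v_6v_7$). Thus the ``all forward edges are inductive'' case already applies when $c\ge 6$, and the ``extra path neighbor $v_{c+1}$'' case should be restricted to $c\le 5$. With that adjustment the argument goes through exactly as you wrote it: in the first case both later triangle-vertices are forced into residue $c-1\pmod 3$, and in the second case the required edge $v_{c+1}v_a$ is neither a path edge (since $a\ge 7$) nor an inductive edge (since the residue pair $(c+1,c-1)$ does not satisfy $r\mapsto r-1$).
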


Part (4) of Theorem \ref{thm:FF-bipartite} and Corollary \ref{cor:FF} will be generalized in Section \ref{sec:classes}.  

We conclude this section by giving a complete analysis of $FirstFit$ on trees with respect to $\kappa$-CB adversary.

\begin{theorem}   \label{thm:FF-tree}
$\beta(FirstFit, \kappa, \tree) = \kappa+1$.
\end{theorem}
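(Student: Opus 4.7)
The proof has an upper bound and a matching lower bound, which I would handle separately.

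For the upper bound, the key observation combines the $\kappa$-CB constraint with the acyclicity of trees. When a new vertex $v$ arrives, all of its pre-neighbors must lie in pairwise distinct components of the partial graph just before $v$: two pre-neighbors in the same component would be connected by a path, and adding $v$ adjacent to both would create a cycle in the final tree. Since at most $\kappa$ components are present when $v$ arrives, $v$ has at most $\kappa$ pre-neighbors; thus $FirstFit$ can always select a bin from $\{1, 2, \ldots, \kappa + 1\}$, giving $\beta(FirstFit, \kappa, \tree) \le \kappa + 1$.

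For the lower bound I would argue by induction on $\kappa$ using the following strengthened statement: there is a $\kappa$-CB presentation of a tree $T_\kappa$ on which $FirstFit$ uses bin $\kappa + 1$, and at the end $T_\kappa$ is a single component whose vertex set hits every bin $1, 2, \ldots, \kappa + 1$. The base case $T_1$ is just the edge $v_1 v_2$ presented in that order. For the inductive step, I construct $\kappa$ sibling components $A_1, A_2, \ldots, A_\kappa$ in decreasing order of index: first build $A_\kappa$ by invoking the sub-routine $T_{\kappa - 1}$ (which yields a single component containing a bin-$\kappa$ vertex), then freeze $A_\kappa$ and build $A_{\kappa - 1}$ by invoking $T_{\kappa - 2}$, continuing down to $A_2$ via $T_1$ and finally $A_1$ as a lone isolated vertex. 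I then introduce a finishing vertex $z$ with a single edge to a bin-$j$ representative in each $A_j$ for $j = 1, \ldots, \kappa$. Its pre-bins are exactly $\{1, 2, \ldots, \kappa\}$, so $FirstFit$ puts $z$ in bin $\kappa + 1$, and $z$ simultaneously merges all the $A_j$'s into one component containing every bin $1, \ldots, \kappa + 1$, completing the inductive step.

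The main thing to verify, and the principal obstacle, is that the component budget of $\kappa$ is never exceeded. When building $A_{\kappa - k + 1}$ as the $k$-th sibling, the sub-routine $T_{\kappa - k}$ satisfies a $(\kappa - k)$-CB budget by the inductive hypothesis, and the $k - 1$ previously built siblings remain as frozen components disjoint from the sub-routine's workspace, giving a peak of $(k - 1) + (\kappa - k) = \kappa - 1$ active components. Only the last phase (creating the isolated $A_1$ on top of $\kappa - 1$ preserved siblings, and then introducing $z$) reaches exactly $\kappa$ components, still within the budget. Building the $A_j$'s in increasing order of $j$, or naively running two copies of $T_{\kappa - 1}$ in parallel, would overshoot the budget; it is crucial that the costlier sub-routines are scheduled when few siblings have been preserved yet. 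This telescoping is what makes the recursion fit into a $\kappa$-CB adversary, and combined with the upper bound yields $\beta(FirstFit, \kappa, \tree) = \kappa + 1$.
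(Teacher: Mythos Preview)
Your proof is correct and follows essentially the same approach as the paper. The upper bound argument is identical (pre-neighbors of $v$ lie in distinct components of the partial forest, hence there are at most $\kappa$ of them), and your lower bound is the same Bean forest construction as in the paper, merely organized around a different inductive invariant: you maintain a single tree $T_\kappa$ hitting all bins $1,\ldots,\kappa+1$, whereas the paper maintains a forest of $\kappa$ trees with bin $i$ appearing in the $i$th tree; unrolled, both recursions produce the same trees and the same component-count analysis (build the largest subinstance first, freeze it, then build progressively smaller ones so that the number of frozen siblings plus the active sub-routine's budget never exceeds $\kappa$).
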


\begin{proof}
The lower bound $\beta(FirstFit, \kappa, \tree) \ge \kappa+1$ is witnessed by the so-called forest construction due to Bean~\cite{bean1976effective} (also independently discovered in~\cite{gyarfas1988line}). We claim that a $\kappa$-CB adversary can construct a forest consisting of $\kappa$ trees $T_1, T_2, \ldots, T_{\kappa}$ with the property that for each $i$ the  $FirstFit$ algorithm uses color $i$ on some vertex $v_i$ belonging to the tree $T_i$. We first prove this claim and later see how it implies the lower bound.

The construction is recursive and so we prove the above statement by induction on $\kappa$. Base case is trivial: when $\kappa=1$ the adversary can give a single isolated vertex. Assume that the statement is true for $\kappa$ and we wish to establish it for $\kappa+1$. The adversary begins by invoking induction and creating $T_1, T_2, \ldots, T_{\kappa}$ such that for $i \in [\kappa]$ there is $v_i \in T_i$ such that $b(v_i)=i$ is assigned by $FirstFit$. Then, the adversary creates a new vertex $u$ connected to all the $v_i$. This process, merges all existing trees into a single tree, which we call $T'_{\kappa+1}$. Moreover, this forces $FirstFit$ to assign $b(u) = \kappa+1$. This tree is set aside, and to satisfy the claim for $\kappa+1$, the adversary invokes the induction again to create another set of trees $T'_1, \ldots, T'_\kappa$ with $v'_i \in T'_i$ such that $b(v'_i) = i$. Note that creating $T'_i$ requires at most $\kappa$ components, so the adversary is $(\kappa+1)$-CB (remember that we  have an additional component $T'_{\kappa+1}$ set aside during the second invocation of induction). Moreover, note that $b(u) = \kappa+1$, so the trees $T'_1, \ldots, T'_{\kappa+1}$ satisfy the claim.

This claim implies the lower bound since the $\kappa$-CB adversary can present $T_1, \ldots, T_\kappa$ with $b(v_i) = i$ for some $v_i \in T_i$. In the last step, the adversary presents $u$ connected to each $v_i$. This process does not increase the number of components and forces $b(u) = \kappa+1$. An example of this construction is shown in Figure~\ref{fig:thm-FF-tree}.

Next, we show the upper bound $\beta(FirstFit, \kappa, \tree) \le \kappa+1$. Suppose that $FirstFit$ uses $m+1$ bins for some $m$.
Let $v$ be the first vertex which is placed into bin $m+1$ by $FirstFit$. By the definition of $FirstFit$, there are $m$ vertices in the pre-neighborhood $N^{-}(v)$ that have been previously assigned to bins $1,2,\ldots, m$. 
Since the adversary is constructing a tree, there can be no cycle. Hence, these $m$ vertices must be in distinct components, i.e., there are at least $m$ distinct components. 
\end{proof}

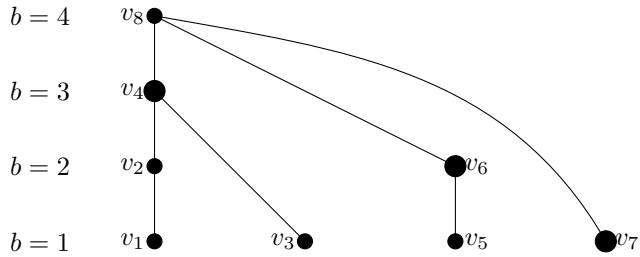
\begin{figure}[ht!]
\centering
\begin{tikzpicture}
\draw [fill] (-3,0) circle [radius=0.1];
\draw [fill] (-1,0) circle [radius=0.1];
\draw [fill] (1,0) circle [radius=0.1];
\draw [fill] (3,0) circle [radius=0.14];
\draw [fill] (-3,1) circle [radius=0.1];
\draw [fill] (1,1) circle [radius=0.14];
\draw [fill] (-3,2) circle [radius=0.14];
\draw [fill] (-3,3) circle [radius=0.1];

\draw (-3,0) -- (-3,1) -- (-3,2) -- (-3,3);
\draw (-1,0) -- (-3,2);
\draw (1,0) -- (1,1) -- (-3,3);
\draw (3,0) to [out=120,in=350] (-3,3);

\node [left] at (-4,0) {$b=1$};
\node [left] at (-4,1) {$b=2$};
\node [left] at (-4,2) {$b=3$};
\node [left] at (-4,3) {$b=4$};

\node [left] at (-3,0) {$v_1$};
\node [left] at (-3,1) {$v_2$};
\node [left] at (-1,0) {$v_3$};
\node [left] at (-3,2) {$v_4$};
\node [right] at (1,0) {$v_5$};
\node [right] at (1,1) {$v_6$};
\node [right] at (3,0) {$v_7$};
\node [left] at (-3,3) {$v_8$};
\end{tikzpicture}
\caption{An example of the forest construction used in the proof of Theorem \ref{thm:FF-tree}. The adversary presents the vertices in order: $v_1, v_2,v_3, \ldots, v_8$. The $FirstFit$  uses $4$ bins while  the adversary uses $3$ connected components during this construction.}
\label{fig:thm-FF-tree}
\end{figure}

\section{$CBIP$ on Bipartite Graphs}   
\label{sec:CBIP}


This section contains the most technical result of this paper. We establish the tight bound of $2\kappa$ on the number of bins used by $CBIP$ with respect to a $\kappa$-CB adversary on bipartite graphs and trees. This provides a finer understanding of the performance of $CBIP$ and is particularly interesting in light of previous lower bounds. Recall that Gutowski et al.~\cite{bip2} proved that any online algorithm has to use at least $2\log n-10$ bins for coloring bipartite graphs with $n$ vertices, which matches the upper bound on $CBIP$ from \cite{lovasz1989line} up to the additive constant $-10$. The construction in \cite{bip2} applied to $CBIP$ (or even $FirstFit$) uses $\log n$ disjoint connected components to force $2\log n$ bins. In particular, the main result of this section, which we state next, demonstrates that this is a necessary feature of their construction.

\begin{theorem}   \label{thm:CBIP-tree}
$\beta(CBIP, \kappa, \tree) = \beta(CBIP, \kappa, 2) = 2\kappa$.
\end{theorem}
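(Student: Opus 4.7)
The plan is to prove the upper bound $\beta(CBIP, \kappa, 2) \le 2\kappa$ for bipartite graphs and the matching lower bound $\beta(CBIP, \kappa, \tree) \ge 2\kappa$ for trees; since every tree is bipartite, the trivial inequality $\beta(CBIP, \kappa, \tree) \le \beta(CBIP, \kappa, 2)$ together with these two bounds yields the double equality claimed.

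For the lower bound I would construct a tree $T^\kappa$ recursively in $\kappa$, maintaining the strengthened inductive invariant that $T^\kappa$ is a tree presented by a $\kappa$-CB adversary, ending as a single connected component whose bipartition sides have bin profiles $P_\kappa := \{1, 2, \ldots, 2\kappa - 1\}$ and $Q_\kappa := \{1, 2, \ldots, 2\kappa - 2, 2\kappa\}$. The base case $\kappa = 1$ is the single edge $v_1 v_2$, which yields profiles $\{1\}$ and $\{2\}$. For the inductive step I would first present $T^{\kappa - 1}$ via the $(\kappa-1)$-CB recursive procedure (ending with one component); then, keeping that component held aside, use the remaining component slot to present a second copy $T'^{\kappa - 1}$ in parallel, so at any moment at most $\kappa$ components co-exist; this ends with two components each having profile $(P_{\kappa - 1}, Q_{\kappa - 1})$. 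I then introduce a vertex $v^*$ whose pre-neighborhood consists of one vertex in the $Q_{\kappa - 1}$-side of $T^{\kappa - 1}$ and one vertex in the $P_{\kappa - 1}$-side of $T'^{\kappa - 1}$; since $Q_{\kappa - 1} \cup P_{\kappa - 1} = \{1, \ldots, 2\kappa - 2\}$, $CBIP$ is forced to assign $b(v^*) = 2\kappa - 1$, merging everything into one component whose sides have bin sets $\{1, \ldots, 2\kappa - 2\}$ and $\{1, \ldots, 2\kappa - 1\}$. Finally I present a leaf $u$ attached to a single vertex of the $\{1, \ldots, 2\kappa - 1\}$-side, forcing $b(u) = 2\kappa$ and turning the opposite side into $\{1, \ldots, 2\kappa - 2, 2\kappa\} = Q_\kappa$. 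Verifying that the graph is a tree, that the $\kappa$-CB constraint is respected throughout, and that the resulting profiles are exactly $(P_\kappa, Q_\kappa)$ is routine.

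For the upper bound I would proceed by induction on the number of vertices presented, maintaining two invariants for every current connected component $C$ with bipartition $(A, B)$: (I1) the set of bins used in $C$ forms an initial segment $\{1, 2, \ldots, M_C\}$ of $\mathbb{N}$, and (I2) $M_C \le 2\kappa^*$, where $\kappa^*$ is the maximum number of components observed at any time up to the present, strengthened by an auxiliary structural condition roughly of the form ``at most one of the two sides $A, B$ of $C$ is $M_C$-full (that is, equal to $\{1, \ldots, M_C\}$), and whenever a side is $M_C$-full we additionally have $M_C \le 2\kappa^* - 1$.'' Invariant (I1) is straightforward: $CBIP$ always assigns the smallest bin missing from the relevant opposite side, and a short analysis of attachments and merges shows that the union of bins in the resulting component extends the prior initial segment by at most one. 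For (I2) the argument is a case analysis on the arriving vertex's type: case (a) (isolated arrival) trivially creates a new component with $M = 1$ and one $M$-full side; case (b) (attachment to a single component) grows $M_C$ only when the opposite side is $M_C$-full, and the auxiliary condition then gives $M_C \le 2\kappa^* - 1$, so the growth to $M_C + 1 \le 2\kappa^*$ stays within the bound; case (c) (merge of $m \ge 2$ components with a new vertex $v$) is the most delicate, and here one uses that $m \le \kappa^*$ together with the auxiliary shape condition on each contributing side $S_i$ to show that $\bigcup_i b(S_i)$ cannot cover $\{1, \ldots, 2\kappa^*\}$, forcing $b(v) \le 2\kappa^*$.

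The principal obstacle is pinpointing and verifying the precise strengthened invariant (I2). The bare bound $M_C \le 2\kappa$ does not close the induction because case (b) attachments can silently grow a bin count and case (c) merges splice side bin-sets together in complicated ways. The technical heart of the argument is a careful case analysis for merges, tracking not only the sizes but also the precise shapes of the bin-sets on each side, in order to rule out any adversarial choice of merging sides $S_i$ whose union would cover $\{1, \ldots, 2\kappa + 1\}$. It is exactly this delicate interplay between per-component bin structure and the $\kappa$-CB component budget that makes this theorem, as the authors note, the most technical result of the paper.
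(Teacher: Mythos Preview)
Your lower bound is correct and is a minor variant of the paper's construction: the paper builds trees $T_i$ via a Fibonacci-style recursion $T_i \leftarrow (T_{i-1}, T_{i-2}, \text{merge vertex})$ and shows the two sides of $T_i$ carry bin sets $[i-1]$ and $[i]\setminus\{i-1\}$, so that $T_{2\kappa}$ is presentable by a $\lfloor 2\kappa/2\rfloor=\kappa$-CB adversary and forces bin $2\kappa$. Your doubling recursion $T^\kappa \leftarrow (T^{\kappa-1}, T^{\kappa-1}, \text{merge vertex}, \text{leaf})$ reaches exactly the same pair of side profiles $(P_\kappa,Q_\kappa)=([2\kappa-1],[2\kappa-2]\cup\{2\kappa\})$ and the $\kappa$-CB verification is the same. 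Either works.

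Your upper bound, however, has a real gap, and it is not just a matter of writing down the auxiliary condition more carefully. A purely \emph{per-component} invariant of the kind you describe cannot close the induction. Concretely, suppose two components $C_1,C_2$ each have side profiles $([2\kappa-2]\cup\{2\kappa\},[2\kappa-1])$; then $M_{C_1}=M_{C_2}=2\kappa$, neither component has an $M_C$-full side, so your auxiliary is satisfied, yet if the new vertex $v$ is attached to the $[2\kappa-1]$-side of $C_1$ and the $[2\kappa-2]\cup\{2\kappa\}$-side of $C_2$ then the opposite side of $v$ covers $[2\kappa]$ and $CBIP$ assigns $b(v)=2\kappa+1$. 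Your invariant does not forbid two such components from coexisting, so your case (c) argument fails.

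What the paper does differently is twofold. First, it proves a sharp structural lemma: every component is exactly of Type~$1[\ell]$ (sides $[\ell-2],[\ell-1]$) or Type~$2[\ell]$ (sides $[\ell-2]\cup\{\ell\},[\ell-1]$); this is strictly stronger than ``at most one side is full'' and is verified by an explicit $16$-case table for merges. Second, and this is the missing idea, the paper maintains a \emph{global} invariant across all components, proved by induction on $\kappa$ rather than on the number of vertices: at most one component can have type parameter in $\{2\kappa-1,2\kappa\}$. The reason is that once such a component exists, the remaining $\kappa-1$ component slots can (by the inductive hypothesis for $\kappa-1$) only produce components of type at most $2(\kappa-1)$. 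This global uniqueness, together with the ``continuity'' observation that a merge can raise the type parameter only when at least two of the merged components already have type parameter $\ge m-1$, is what actually blocks the bad merge above. Your vertex-by-vertex induction with only local information cannot see this.
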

\begin{proof}
Since $\beta(CBIP, \kappa, 2) \ge \beta(CBIP, \kappa, \tree)$, the lower bound follows from Lemma~\ref{lem:CBIP-lowerbound} and the upper bound follows from Lemma~\ref{lem:CBIP-upperbound}.
\end{proof}

Observe that Theorem \ref{thm:CBIP-tree} implies that, in the class of bipartite graphs, worst case input already appears in \tree\ for $CBIP$. 

We begin by establishing the lower bound used in the above theorem.

\begin{lemma}
\label{lem:CBIP-lowerbound}
$\beta(CBIP, \kappa, \tree) \ge 2\kappa.$
\end{lemma}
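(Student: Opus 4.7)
The plan is to proceed by induction on $\kappa$, but the bare claim ``$CBIP$ uses at least $2\kappa$ bins'' is too weak to propagate, because to push a new vertex into a new bin $CBIP$'s rule requires us to know exactly which bins sit on each side of the current bipartition. So I would strengthen the induction to the following invariant: a $\kappa$-CB adversary can present a tree $T_\kappa$ ending as a single connected component whose bipartition $(X_\kappa, Y_\kappa)$ satisfies
\[ b(X_\kappa) = \{1, 2, \ldots, 2\kappa - 1\}, \qquad b(Y_\kappa) = \{1, 2, \ldots, 2\kappa - 2\} \cup \{2\kappa\}. \]
The base case $\kappa = 1$ is a single presented edge $r$--$s$: $CBIP$ gives $X_1 = \{r\}$ bin $1$ and $Y_1 = \{s\}$ bin $2$, matching the invariant.

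For the inductive step, I would have the adversary first build two vertex-disjoint copies $T_\kappa$ and $T_\kappa'$ in sequence; while the second copy is being constructed the first contributes one component, so the momentary count is at most $1+\kappa = \kappa+1$, which is within the $(\kappa+1)$-CB budget. Write the two bipartitions $(X,Y)$ and $(X',Y')$. The adversary then presents a new vertex $u$ with exactly two edges, one to an arbitrary vertex of $X$ and one to an arbitrary vertex of $Y'$. This merges the two copies into a single tree (the two neighbors lie in different components, so no cycle is created), and in the merged bipartition $u$ lies on the side $Y \cup X' \cup \{u\}$, so $u$'s opposite side is $X \cup Y'$ whose bin set is
\[ \{1, \ldots, 2\kappa-1\} \cup \{1, \ldots, 2\kappa-2, 2\kappa\} = \{1, 2, \ldots, 2\kappa\}. \]
Hence $CBIP$ is forced to assign $u$ bin $2\kappa+1$. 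The adversary then presents $w$ with the single edge to $u$; now $w$'s opposite side is $Y \cup X' \cup \{u\}$ with bin set $\{1, \ldots, 2\kappa+1\}$, so $CBIP$ uses bin $2\kappa + 2$ on $w$. Setting $X_{\kappa+1} = Y \cup X' \cup \{u\}$ and $Y_{\kappa+1} = X \cup Y' \cup \{w\}$, a direct check shows their bin sets are $\{1, \ldots, 2\kappa+1\}$ and $\{1, \ldots, 2\kappa, 2\kappa+2\}$ respectively, exactly what the invariant requires at $\kappa+1$.

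The main obstacle is guessing the right invariant: the two sides must carry \emph{complementary} gaps (bin $2\kappa$ missing from $X_\kappa$ and bin $2\kappa-1$ missing from $Y_\kappa$) so that the merge $X \cup Y'$ covers the whole prefix $\{1, \ldots, 2\kappa\}$ and thereby forces bin $2\kappa+1$; without this fine bookkeeping the induction only yields a weaker bound. Everything else is mechanical: the resulting graph is a tree (each new vertex is added with at most two neighbors, and when it has two they lie in different components); the component count never exceeds $\kappa+1$ during the construction of $T_{\kappa+1}$; and because $CBIP$ depends only on the bins on the opposite side of the arriving vertex, the claimed bin assignments follow directly from its definition.
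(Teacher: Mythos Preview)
Your proof is correct and rests on exactly the same invariant as the paper's: one side of the bipartition carries bins $[2\kappa-1]$ and the other carries $[2\kappa]\setminus\{2\kappa-1\}$ (the paper calls these $O_{2\kappa}$ and $E_{2\kappa}$). The difference lies only in the recursive scaffolding. The paper builds a sequence $T_1, T_2, \ldots$ indexed by the target bin number $i$, with a Fibonacci-style rule $T_i = T_{i-1} \cup T_{i-2} \cup \{v\}$ and shows the construction of $T_i$ is $\lfloor i/2\rfloor$-CB; the lemma then follows by taking $i=2\kappa$. You instead index directly by $\kappa$ and jump two bins at a time, using two copies of $T_\kappa$ plus two new vertices $u,w$. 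Your route is a bit more direct (no need to track odd-indexed intermediate trees or argue the $\lfloor i/2\rfloor$ component bound separately), while the paper's construction is slightly more economical in vertices and dovetails with the Type~$1[\ell]$/Type~$2[\ell]$ bookkeeping used in its matching upper bound. Either way the crucial insight is the one you identified: the complementary gaps on the two sides are what make the merge force the next bin.
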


\begin{proof}
In this proof, the notation $r(T)$ is used to denote the root of a rooted tree $T$. The statement of the theorem is witnessed by the following recursive adversarial construction: 

\textit{Base cases:} $T_1$ is a rooted tree consisting of a single vertex. $T_2$ is a rooted tree consisting of one edge, where $r(T_2)$ is defined as the vertex that is assigned to bin $2$ by $CBIP$. 

\textit{Recursive step:} let $i \ge 3$. To construct $T_i$ the adversary does the following:
\begin{itemize}
    \item[(1)] it constructs $T_{i-1}$;
    \item[(2)] it constructs $T_{i-2}$;
    \item[(3)] it presents a new vertex $v$ connected via an edge to $r(T_{i-1})$ and via another edge to $r(T_{i-2})$.
\end{itemize}
The vertex from step (3) becomes the root $r(T_i)$ of the newly formed tree $T_i$.

As usual, let $b(v)$ denote the bin to which $v$ is assigned by $CBIP$. For each $i$, let $E_i$ denote the set of nodes that are at even  distance from $r(T_i)$ in $T_i$. Similarly, let $O_i$ denote the set of nodes at odd distance from $r(T_i)$ in $T_i$. We claim that for the above construction it holds that 
\begin{itemize}
    \item[(i)]$b(E_i) = [i]\setminus\{i-1\}$;
    \item[(ii)] $b(O_i) = [i-1]$;
    \item[(iii)] the construction of $T_i$  satisfies the $\lfloor i/2 \rfloor$-CB constraint.
\end{itemize}

We prove the above claim by strong induction on $i$. The statements are immediate for the base cases of $T_1$ and $T_2$. Assume that the statement holds for all $j \le i-1$ for some $i \ge 3$. Next, consider $T_i$. Examining the construction we obtain:
\begin{itemize}
    \item $E_i = O_{i-2} \cup O_{i-1} \cup \{v\}$,
    \item $O_i = E_{i-2} \cup E_{i-1}$.
\end{itemize}
Using the inductive assumption, we have $b(E_{i-2}) = [i-2]\setminus\{i-3\}$ and $b(E_{i-1}) = [i-1]\setminus\{i-2\}$. Therefore, we have $b(O_i)=b(E_{i-2} \cup E_{i-1}) = [i-1]$ establishing part (ii) of the claim.

Using the inductive assumption again, we have $b(O_{i-2}) = [i-3]$ and $b(O_{i-1}) = [i-2]$. Therefore, $b(E_i \setminus\{v\}) = b(O_{i-2} \cup O_{i-1}) = [i-2]$.

$CBIP$ uses the bipartition induced by $E_{i}$ and $O_{i}$ to decide how to color $v$. Specifically, $CBIP$ uses the first available bin other than those in $O_i$. Therefore, since $b(O_i) = [i-1]$ we have $b(v) = i$. Combining it with the result from the previous paragraph, we get $B(E_i) = [i] \setminus\{i-1\}$ establishing part (i) of the claim.

As for the number of connected components used in the presentation of $T_i$, observe that step (1) uses $\lfloor (i-1)/2 \rfloor$ components by induction. After step (1), $T_{i-1}$ is put aside as a separate component. Therefore, step (2) uses $1 + \lfloor (i-2)/2 \rfloor$ connected components (we invoked the inductive assumption one more time here). Step (3) of the construction does not require any additional components. Therefore, the total number of components is bounded by $\max\left(\lfloor (i-1)/2 \rfloor, 1 + \lfloor (i-2)/2 \rfloor \right).$ It is easy to see that this expression is exactly $\lfloor i/2 \rfloor$ by considering the cases of odd and even $i$ separately. This establishes part (iii) of the claim.

Lastly, note that the claim implies that $CBIP$ uses $2\kappa$ bins on $T_{2\kappa}$ and that the presentation of $T_{2\kappa}$ satisfies the $\kappa$-CB constraint.
\end{proof}

We finish this section with a matching upper bound for the class of bipartite graphs.

\begin{lemma}   \label{lem:CBIP-upperbound}
$\beta(CBIP,\kappa,2) \le 2\kappa$.
\end{lemma}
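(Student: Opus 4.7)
The plan is to show, by induction on the formation history of each component, the inequality $M(C) \le 2\phi(C)$, where $M(C)$ denotes the largest bin that $CBIP$ assigns to any vertex of component $C$, and $\phi(C)$ denotes the maximum number of connected components present simultaneously at any moment during the sub-history that produces $C$ from isolated vertices. The $\kappa$-CB constraint gives $\phi(C) \le \kappa$ for every such $C$, so the desired conclusion $\beta(CBIP,\kappa,2) \le 2\kappa$ will follow immediately.

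A naive induction carrying only $M(C) \le 2\phi(C)$ cannot close: the bin that $CBIP$ assigns to a new vertex $v$ can exceed the current maximum bin of $v$'s component by one, incurring a $+1$ error at each event. To plug this, I would strengthen the induction to maintain, in tandem, the following four invariants on every component $C$ formed during $CBIP$'s run:
\begin{enumerate}
    \item[(I)] $M(C) \le 2\phi(C)$.
    \item[(II)] At most one side of $C$'s bipartition contains every bin in $\{1,\ldots,M(C)\}$.
    \item[(III)] If one side of $C$ does contain every such bin, then in fact $M(C) \le 2\phi(C) - 1$.
    \item[(IV)] If neither side contains every such bin, then each side misses exactly one bin; moreover the two missing bins are $M(C)-1$ and $M(C)$, one on each side.
\end{enumerate}
The base case $C=\{v\}$ satisfies (I)--(IV) trivially. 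In an extension event $C = C' \cup \{v\}$, a direct check shows that either $C'$ satisfies (IV) and the bin $CBIP$ picks for $v$ is already present on $v$'s side in $C'$ (so no new bin is introduced and (I)--(IV) pass to $C$), or $C'$ satisfies (II)+(III) with $v$'s opposite being the all-bins side of $C'$, in which case $b(v) = M(C')+1$ but (III) on $C'$ gives $b(v) \le 2\phi(C') = 2\phi(C)$ and a direct computation of the new bipartition shows that $C$ then satisfies (IV).

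The substantive work lies in the merge case $C = D_1 \cup \cdots \cup D_m \cup \{v\}$ with $m \ge 2$. Let $M^* = \max_i M(D_i)$. The only way $b(v)$ can exceed $M^*$ is to equal $M^*+1$, and this requires the union $\bigcup_i b(A_i)$ of bins on the $D_i$-sides facing $v$ to equal $\{1,\ldots,M^*\}$. Pick $D_{i^*}$ with $M(D_{i^*}) = M^*$. If $A_{i^*}$ alone already contains every bin of $D_{i^*}$, then (III) on $D_{i^*}$ gives $M(C) = M^*+1 \le 2\phi(D_{i^*}) \le 2\phi(C)$ directly. Otherwise (IV) on $D_{i^*}$ forces a bin in $\{M^*-1,M^*\}$ to be missing from $A_{i^*}$, and that bin must be supplied by $A_j$ for some $j \ne i^*$, which in turn forces $M(D_j) \ge M^*-1$ and hence, by (I), $\phi(D_j) \ge \lceil (M^*-1)/2\rceil$. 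A short case analysis comparing this with $\phi(D_{i^*}) \ge \lceil M^*/2\rceil$ shows $\phi(D_j) \ge \phi(D_{i^*})$ whenever (I) is tight on $D_{i^*}$; since $D_{i^*}$ and $D_j$ must coexist at the merge moment, the adversary is forced to use $\phi(C) \ge \phi(D_{i^*})+1$, providing exactly the extra component needed to absorb the $+1$ from $b(v)$. The bipartition $\bigl(\bigcup_i A_i,\ \{v\} \cup \bigcup_i B_i\bigr)$ is then computed directly to verify (II)--(IV) on $C$.

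The hardest step, and the main obstacle, is this merge case: managing the joint bookkeeping of the per-$D_i$ missing-bin invariants, the $\phi$-cost required to realize those configurations, and the cooperative way in which several $D_i$'s must together cover $\{1,\ldots,M^*\}$ on $v$'s opposite side. Once (I)--(IV) are maintained throughout $CBIP$'s run, the bound $M(C) \le 2\phi(C) \le 2\kappa$ holds for every component ever formed, which is precisely $\beta(CBIP,\kappa,2) \le 2\kappa$.
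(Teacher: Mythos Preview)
Your approach is essentially the paper's: invariants (II)--(IV) are a repackaging of the paper's Type~$1[\ell]$/Type~$2[\ell]$ classification of components, and your per-component potential $\phi(C)$ is a pleasant localization of the paper's global induction on $\kappa$. The overall strategy is right, and $\phi$ is well-defined with $\phi(C)\le\kappa$ (present $C$-vertices always form a union of full components of $G\cap\sigma([i])$).

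There is, however, a genuine gap in your invariant system. In the case where exactly one side of $C$ contains all of $[M(C)]$ (your (II)+(III) regime), you never specify the bin set of the \emph{other} side. The paper's Type~$1[\ell]$ pins this down to exactly $[M(C)-1]$, and that information is essential for the induction to close. Without it:
\begin{itemize}
    \item In the extension step, if $v$'s opposite side is the \emph{non}-all-bins side $Y$, you cannot compute $b(v)=\min\bigl(\mathbb{N}\setminus b(Y)\bigr)$, let alone verify (IV) for $C$. Your two-case split (``either $C'$ satisfies (IV)\dots or $C'$ satisfies (II)+(III) with $v$'s opposite being the all-bins side'') silently omits this third possibility.
    \item In the merge step, your dichotomy (``if $A_{i^*}$ contains every bin of $D_{i^*}$\dots otherwise (IV) on $D_{i^*}$ forces\dots'') misses the case where $D_{i^*}$ is in the (II)+(III) regime with the all-bins side being $B_{i^*}$ rather than $A_{i^*}$. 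Then (IV) does not apply, and nothing in your invariants forces the bin missing from $A_{i^*}$ to lie in $\{M^*-1,M^*\}$.
\end{itemize}
The fix is to strengthen the (II)+(III) regime to also assert that the non-all-bins side equals $[M(C)-1]$; your invariants then coincide with the paper's Type~1/Type~2, and the induction closes after a case analysis like the paper's Table~2 (which you currently gesture at with ``computed directly''). Finally, the step $\phi(C)\ge\phi(D_{i^*})+1$ does not follow merely from ``$D_{i^*}$ and $D_j$ coexist at the merge moment''; you need to compare the peak times of $D_{i^*}$ and $D_j$ in the restricted presentation and use $\phi(D_j)\ge\phi(D_{i^*})$ to conclude that at one of those peaks an extra component from the other subtree is already present.
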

\begin{proof}
Consider a certain point in execution of $CBIP$ on the input graph. As usual, let $b(v)$ denote the bin to which $v$ is assigned by $CBIP$. We say that a connected component $CC$ is of Type $1[\ell]$ if $CC$ can be partitioned in two blocks $A$ and $B$ such that
\begin{itemize}
    \item all edges go between $A$ and $B$
    \item $b(A) = [\ell-2]$
    \item $b(B) = [\ell-1]$
\end{itemize}

Similarly, we say that a connected component $CC$ is of Type $2[\ell]$ if $CC$ can be partitioned in two blocks $A$ and $B$ such that
\begin{itemize}
    \item all edges go between $A$ and $B$
    \item $b(A) = [\ell-2] \cup \{\ell\}$
    \item $b(B) = [\ell-1]$
\end{itemize}

Figure~\ref{fig:cbip-types} shows an example construction with two connected components of Type $2[4]$ and Type $1[5]$.

\begin{figure}[ht!]
    \centering
    \includegraphics{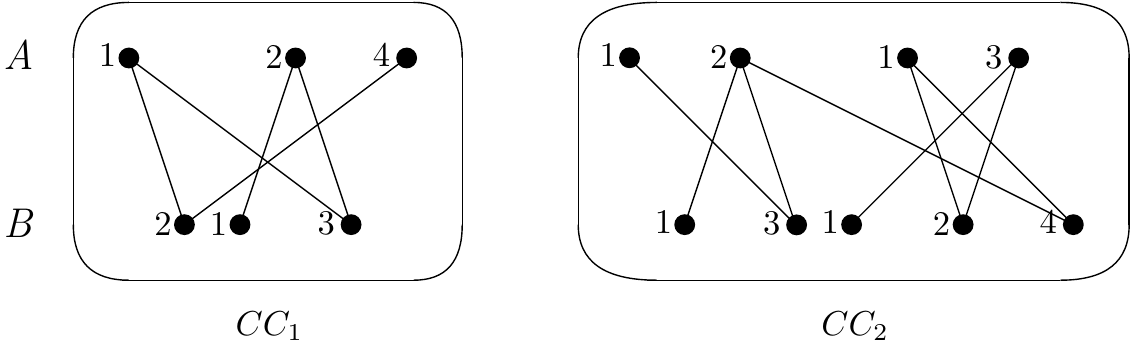}
    \caption{A snapshot of an execution of $CBIP$ on an input instance, with vertices presented in left-to-right order. At this step in the presentation, the graph contains two connected components $CC_1$ and $CC_2$ of Type $2[4]$ and Type $1[5]$ respectively. Labels indicate bins used by $CBIP$.}
    \label{fig:cbip-types}
\end{figure}

The high level idea is that the $\kappa$-CB adversary can only force components that are either of Type $1[\ell]$ or of Type $2[\ell]$ for some $\ell \le 2\kappa$. Before we prove it formally, we observe that when a new vertex $v$ arrives, it can either (1) be an isolated vertex (taken to be of Type $1[2]$), (2) be added to an existing component, or (3) be used to merge two or more existing components. Formally, we say that components $CC_1, CC_2, \ldots, CC_j$ get \emph{merged} at time $t$ if vertex $v = \sigma(t)$ satisfies $N^-(v) \cap CC_i \neq \emptyset$ for $i \in [j]$, and $CC_i$ were distinct connected components at time $t-1$.

We record what happens to types of components after each of the above operations (1), (2) and (3). During operation (1), a new vertex of Type $1[\ell]$ for $\ell=2$ is added. Clearly, $\ell \leq 2\kappa$ for any $\kappa \geq 1$. Next, we consider operation (2), i.e., when a new vertex $v$ gets added to a component $CC$. We assume that the two blocks of vertices of $CC$ are $A$ and $B$ and that they satisfy the conditions of Type $1$ or $2$. The resulting component is called $CC'$. The changes to types after vertex $v$ is presented are recorded in Table~\ref{table:case1}.

\begin{table}[h!]
\caption{Type changes for case (1), i.e., when $v$ is added to an existing component.}  \label{table:case1}
\begin{center}
\begin{tabular}{ |l|l|l| } 
\hline
$v$ is connected to& Type of $CC$ & Type of $CC'$ \\ 
\hline
 \hline
 $A$ & $1[\ell]$ & $1[\ell]$ \\ 
 $A$ & $2[\ell]$ & $2[\ell]$ \\ 
 $B$ & $1[\ell]$ & $2[\ell]$ \\ 
 $B$ & $2[\ell]$ & $2[\ell]$ \\ 
 \hline
\end{tabular}
\end{center}
\end{table}

Finally, we consider what happens when a vertex $v$ is used to merge two or more components. We distinguish four types of components, the numbers of which are denoted by $k_1, k_2, k_3$, and $k_4$, respectively:
\begin{enumerate}
    \item $CC_i^A$ of Type $1[\ell_i^A]$ for $i \in [k_1]$. Vertex $v$ has a neighbor on the $A$-side of such components.
    \item $CC_i^B$ of Type $1[\ell_i^B]$ for $i \in [k_2]$. Vertex $v$ has a neighbor on the $B$-side of such components.
    \item ${CC'}_i^A$ of Type $2[{\ell'}_i^A]$ for $i \in [k_3]$. Vertex $v$ has a neighbor on the $A$-side of such components.
    \item ${CC'}_i^B$ of Type $2[{\ell'}_i^B]$ for $i \in [k_4]$. Vertex $v$ has a neighbor on the $B$-side of such components.
\end{enumerate}


Let $m = \max\{\ell_{i_1}^A, \ell_{i_2}^B, {\ell'}_{i_3}^A, {\ell'}_{i_4}^B: i_1 \in [k_1], i_2 \in [k_2], i_3 \in [k_3], i_4 \in [k_4]\}$. We call $m$ the \emph{type parameter} of the partially constructed input graph.

We say that a block $A$ or $B$ of a particular component being merged is on the opposite side of $v$ if $v$ has a neighbor among the vertices of the block. Otherwise, we say that the block is on the same side as $v$. For example, block $A$ of $CC_i^A$ component is on the opposite side of $v$, whereas block $B$ of the same component is on the same side as $v$. Let $S_{-v}$ denote the set of bins already used for the vertices of blocks on the opposite side of $v$, and let $S_v$ denote the set of bins already used for the vertices of blocks on the same side as $v$. By the definitions of Type 1 and 2 components as well as $m$, it is easy to see that each of $S_v, S_{-v}$ can be only one of the following four options: $[m-2], [m-1], [m-2]\cup \{m\}, [m]$. This reduces the problem of computing the type of the merged component to analyzing 16 cases. For example, if $S_{-v}=[m-2]$ and $S_v = [m-2]$ then vertex $v$ will be assigned bin $m-1$ and the merged component will be of Type $1[m]$ since it will have one side with bins $[m-2]$ and the opposite side with bins $[m-1]$. We denote the merged component by $CC'$ and Table~\ref{table:case2} summarizes all of the 16 cases.

\begin{table}[h!]
\caption{Type changes for case (2), i.e., when $v$ is used to merge some existing components.}   \label{table:case2}
\begin{center}
\begin{tabular}{ |l|l|l|l|l| } 
\hline
$S_{-v}$ & $S_v$ & Possible? & Bin of $v$ & Type of $CC'$ \\ 
\hline
 \hline
 $[m-2]$ & $[m-2]$ & yes & $m-1$ & $1[m]$ \\
 $[m-2]$ & $[m-1]$ & yes &$m-1$ & $1[m]$ \\
 $[m-2]$ & $[m-2]\cup\{m\}$ & no & NA & NA \\
 $[m-2]$ & $[m]$ & no & NA & NA \\
 
 $[m-1]$ & $[m-2]$ & yes & $m$ & $2[m]$ \\
 $[m-1]$ & $[m-1]$ & yes & $m$ & $1[m+1]$ \\
 $[m-1]$ & $[m-2]\cup\{m\}$ & yes & $m$ & $2[m]$ \\
 $[m-1]$ & $[m]$ & yes & $m$ & $1[m+1]$\\
 
 $[m-2]\cup\{m\}$ & $[m-2]$ & no & NA & NA \\
 $[m-2]\cup\{m\}$ & $[m-1]$ & yes & $m-1$ & $2[m]$ \\
 $[m-2]\cup\{m\}$ & $[m-2]\cup\{m\}$ &no & NA & NA \\
 $[m-2]\cup\{m\}$ & $[m]$ & no & NA & NA\\
 
 $[m]$ & $[m-2]$ & no & NA & NA \\
 $[m]$ & $[m-1]$ &yes & $m+1$ & $2[m+1]$ \\
 $[m]$ & $[m-2]\cup\{m\}$ & no & NA & NA \\
 $[m]$ & $[m]$ & yes & $m+1$ & $1[m+2]$ \\
 \hline
\end{tabular}
\end{center}
\end{table}
The reason that certain combinations in  Table~\ref{table:case2} are impossible is that if one side is colored with bins $[m]$ then the opposite side must use bin $m-1$ (because of how $CBIP$ works).  

Observe that from Table~\ref{table:case2}, the type parameter of $CC'$ can either stay the same, increase by additive $1$, or increase by additive $2$.  Furthermore, it can be directly verified  from the table that an increase is possible only if there are at least two components having type parameters not less than $m-1$.  We refer to this property as the \emph{continuity of the type parameter}. 

Assume that the input graph $G$ and the presentation order $\sigma$ satisfy the $\kappa$-CB condition, then the above observations imply the following statements:
\begin{enumerate}
    \item[(i)] $\forall\ i$ we have $G \cap \sigma([i])$ consists of Type $1/2[\ell]$ components for $\ell \le 2 \kappa$;
    \item[(ii)] $\forall\ i$ there can be at most one component that is of one of the following four types: Type $1[2\kappa-1]$, Type $1[2\kappa]$, Type $2[2\kappa-1]$, Type $2[2\kappa]$.
\end{enumerate}
Note that (i) immediately implies the statement of this lemma.



These statements can be proved by induction on $\kappa$. The base case $\kappa=1$ is easy to verify. Indeed, if $i=1$, i.e., there is just a single vertex, then it is of Type 1[2]. Consider $i\ge 2$. Observe that for $\kappa=1$ the algorithms $CBIP$ and $FirstFit$ have identical behavior. Therefore, (2) in Theorem \ref{thm:FF-bipartite} shows (i) is true, and the single component is of Type 2[2]. Hence, (ii) is true.

We proceed to the induction step. Assume (i) and (ii) are true for $\kappa$, we consider the case $\kappa +1$. First, we show (ii). Let $CC_1$ be the first component that is of one of the following types: Type $1[2\kappa+1]$, Type $1[2\kappa+2]$, Type $2[2\kappa+1]$, Type $2[2\kappa+2]$. Since the adversary is $(\kappa+1)$-CB, the existence of $CC_1$ implies that the adversary becomes  $\kappa$-CB when creating any new component that is disjoint from $CC_1$. By the induction assumption, the adversary can only create components that are of Type $1/2[\ell]$ for $\ell \le 2\kappa$. This proves (ii). Next we show (i). By the \emph{continuity of the type parameter}, in order for the type parameter to go beyond $2\kappa+2$ there need to be at least two components both having type parameters not less than $2\kappa +1$. 
By (ii), this is impossible, so (i) is true.
\end{proof}
\section{Lower Bounds for Several Graph Classes}
\label{sec:classes}

In this section we establish non-existence of competitive algorithms against $\kappa$-CB adversaries for various classes of graphs. We begin by establishing a strong non-competitiveness result for $\chi$-colorable graphs for $\chi \ge 3$. This generalizes part (4) of Theorem~\ref{thm:FF-bipartite} to arbitrary algorithms.

\begin{theorem}   
\label{thm:general-chi}
$\beta(1,\chi) = \infty$ for every $\chi \ge 3$.
\end{theorem}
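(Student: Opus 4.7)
The plan is to reduce to $\chi=3$ (since a $3$-colorable graph is $\chi$-colorable for every $\chi\ge 3$, any construction that works for $\chi=3$ automatically works for larger $\chi$) and show that for every deterministic algorithm $\cA$, the $1$-CB adversary can force $\cA$ to use arbitrarily many bins on a $3$-colorable connected graph. The construction generalizes part (4) of Theorem \ref{thm:FF-bipartite}, which exploits the greedy behavior of $FirstFit$, to an arbitrary algorithm by using the saturated-bin framework (Definition \ref{def:saturated}) together with Fact \ref{fact:saturation} in an adaptive fashion.

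I would prove by induction on $k$ the stronger claim: for every $k$ and every deterministic $\cA$, the $1$-CB adversary can construct a $3$-colorable connected graph on which $\cA$ ends up with at least $k$ saturated bins. Once the claim is established, Fact \ref{fact:saturation} yields an additional bin $k+1$: attach one more vertex to a monochromatic matching across the $k$ saturated bins. The matching lies in a single color class, so the new vertex safely takes one of the other two colors and remains $3$-consistent; and since the matching sits inside the existing connected subgraph, $1$-CB is preserved. Taking $k$ arbitrary gives $\beta(\cA,1,3)=\infty$.

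For the inductive step from $k-1$ to $k$ saturated bins, the adversary starts from a state with $k-1$ saturated bins $b_1,\ldots,b_{k-1}$ and presents a carefully orchestrated sequence of new vertices. A vertex of color $c$ is attached to a non-$c$ colored matching across $b_1,\ldots,b_{k-1}$, which by Fact \ref{fact:saturation} forces it into a bin outside $\{b_1,\ldots,b_{k-1}\}$. The adversary cycles $c$ through $\{R,G,B\}$ and adaptively routes subsequent vertices toward specific under-saturated newly opened bins by tailoring their pre-neighborhoods to block every other bin; the objective is to eventually accumulate all three colors inside a single new bin, which saturates it and advances the induction.

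The main obstacle is a rigorous termination argument: one must show that this adaptive loop ends with a genuinely saturated new bin, even against an algorithm that tries to avoid saturation either by opening many fresh bins or by piling same-color vertices into existing ones. The resolution is a win--win analysis: if the algorithm keeps opening new bins, the total bin count already witnesses $\beta$ blowing up; otherwise a pigeonhole on the three colors across finitely many new bins combined with the adversary's adaptive routing forces some new bin to collect all three colors after at most finitely many additions. Formalizing this termination while simultaneously preserving $1$-CB (automatic, because every added vertex attaches to the existing component) and $3$-colorability (automatic, because each new vertex's color is chosen outside the single color used by its pre-neighborhood) is the technical heart of the proof.
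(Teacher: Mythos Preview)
Your proposal takes a genuinely different route from the paper. The paper does \emph{not} run an induction on the number of saturated bins. Instead, it builds the graph in layers: layer $L_1$ is a long path, and by pigeonhole two bins $b_1,b_2$ each receive many $L_1$-vertices, from which large non-adjacent subsets $B_1',B_2'$ are extracted. Each vertex in layer $L_i$ (for $i\ge 2$) is attached to one fresh vertex from each $B_1',\ldots,B_i'$, forcing it out of $b_1,\ldots,b_i$; pigeonhole over the remaining bins then produces a large $B_{i+1}'$ in some new bin $b_{i+1}$. The resulting graph is an ``almost-forest'' (a forest whose leaves are threaded by the $L_1$-path), which yields $3$-colorability structurally rather than via an on-the-fly coloring. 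Crucially, the paper never tries to route a vertex to a \emph{particular} bin---it only blocks a fixed list of bins and lets pigeonhole do the rest.

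Your approach, by contrast, hinges on adaptive routing to saturate a \emph{specific} new bin, and this is where the gap lies. You correctly flag the termination argument as ``the technical heart,'' but you do not supply it, and it is not routine. Concretely: to steer a vertex of color $c$ into a target bin you must block every other bin using non-$c$ vertices, but a bin containing only color-$c$ vertices cannot be blocked this way. An algorithm can exploit this by maintaining three new bins with color profiles $\{G,B\}$, $\{R,B\}$, $\{R,G\}$ (or, earlier, three monochromatic bins $\{R\},\{G\},\{B\}$). Against a monochromatic pre-neighborhood of color $c$, the algorithm drops the new vertex into the bin missing exactly $c$, and neither of the two colors you are free to assign is new there; against a two-color pre-neighborhood, the forced third color can likewise be absorbed by the matching bin. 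Breaking this defense seems to require alternating between one-color and two-color pre-neighborhoods and a careful potential argument showing that the algorithm cannot avoid \emph{both} opening fresh bins and enlarging some bin's color set indefinitely. That argument may well go through, but it is substantially more delicate than your sketch suggests, and your parenthetical that $3$-colorability is ``automatic because each new vertex's color is chosen outside the \emph{single} color used by its pre-neighborhood'' already conflicts with the two-color pre-neighborhoods you will need. The paper's layered/pigeonhole approach sidesteps all of this.
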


\begin{proof}
Since $\beta(1, \chi)$ is non-decreasing in $\chi$, it suffices to prove that $\beta(1, 3) = \infty$. Fix an arbitrary coloring algorithm $\cA$. We show that for every $t \in \mathbb{N}$, a $1$-CB adversary can construct a $3$-colorable graph $G$ so that $\cA$ uses at least $t$ different bins to color vertices in $G$. It may be helpful to consult Figure~\ref{fig:lb-chi3} while reading this proof.

The construction of $G$ proceeds in layers, which we denote by $L_1, L_2, \ldots, L_{t-1}$. Vertices (and their pre-neighborhoods) in $L_1$ are presented first, followed by $L_2$, and so on. The construction stops as soon as $\cA$ uses $t$ distinct bins, which may happen before $L_{t-1}$ and will be guaranteed to happen in $L_{t-1}$.

Each layer consists of ``sufficiently many'' vertices, meaning that there should be enough vertices in lower layers to guarantee that the construction of higher layers goes through. Initially, we don't quantify ``sufficiently many,'' although we shall give some estimates on sizes of layers at the end of this proof.

Layer $L_1$ is simply a path $P$ of sufficiently large length $\ell_1$. The adversary presents the vertices in $P$ in the order in which they appear on the path. There are two possibilities: (i) $\cA$ already uses at least $t$ bins to color $P$; (ii) $\cA$ uses fewer than $t$ bins to color $P$. In case (i) the construction is over and the adversary has achieved its goal. 

Next, we handle case (ii). Observe that $\cA$ has to use at least two different bins to color $P$ correctly. Consider two bins $b_1$ and $b_2$ with the most number of vertices assigned to them by $\cA$. Let the sets of nodes assigned to those bins be $B_1$ and $B_2$, respectively, with $|B_1| \ge |B_2|$. The definition of case (ii) implies that $|B_1| \ge \ell_1/t$. Since $P$ is a path, no bin can contain more than $\ell_1/2+1$ nodes. Thus, the number of nodes not in $B_1$ is at least $\ell_1/2-1$. Since they are partitioned among at most $t-1$ bins, including $B_2$, and $B_2$ is most populous then $|B_2| \ge (\ell_1/2-1)/(t-1) = (\ell_1-2)/(2t-2)$. Next, we select subsets $B_1' \subseteq B_1$ and $B_2' \subseteq B_2$ so that all the nodes in $B_1' \cup B_2'$ are non-adjacent in $P$ and $|B_1'|=|B_2'|=\ell_1/(10t)$. This can be done as follows: alternatively pick a node from $B_1$ or $B_2$ to include in $B_1'$ or $B_2'$, respectively, and remove its neighbors from $B_2$ or $B_1$, respectively. Each pair of such steps includes one vertex into $B_1'$ and one vertex into $B_2'$ removing at most $3$ vertices from each $B_1$ and $B_2$ from future considerations. Thus, this can go on for at least $|B_2|/3 \ge \ell_1/(10t)$ rounds. In conclusion, we end up with sets of nodes $B_1'$ and $B_2'$ such that
\begin{itemize}
    \item all nodes in $B_i'$ are placed in bin $b_i$ by $\cA$, where $i \in \{1, 2\}$;
    \item $|B_1'|=|B_2'|=\ell_1/(10t)$.
\end{itemize}
In particular, the second item implies that $|B_1'|$ and $|B_2'|$ can be assumed to be sufficiently large.

Construction of each following layer $L_i$ for $i \ge 2$ either terminates early because $\cA$ used at least $t$ different bins or forces $\cA$ to assign sufficiently many vertices to bin $b_{i+1}$. We shall denote the set of such vertices\footnote{Note that the index of $B_{i+1}'$ is off by one with respect to the index of layer $L_i$ with which it is associated. This happens for $i \ge 2$ since layer $L_1$ has two sets $B_1'$ and $B_2'$ associated with it.} $B_{i+1}'$ for layer $L_i$. Assuming that the construction hasn't terminated in layer $L_{i-1}$, the next layer $L_i$ is constructed by the adversary by repeating the following steps sufficiently many times:



\begin{itemize}
    \item[(1)] the adversary chooses vertices $u_j \in B_j'$ for all $j \le i$ arbitrarily;
    \item[(2)] the adversary presents a new vertex $v$ with pre-neighborhood $\{u_1, \ldots, u_{i}\}$;
    \item[(3)] the adversary updates $B_j' \gets B_j' \setminus\{u_j\}$ for all $j \le i$.
\end{itemize}
Due to step (3) we say that $v$ \emph{consumes} nodes $u_j$ from $B_j'$ for $j \in [i]$. Observe that step (2) guarantees that $\cA$ has to assign $v$ to a bin other than $b_1, \ldots, b_i$. Just as for layer $L_1$, if $\cA$ uses $t$ different bins in this layer then we are done. Otherwise, let $b_{i+1}$ be the bin that has the most number of vertices assigned to it in layer $L_i$. If the adversary presents $\ell_i$ vertices in layer $L_i$ then the number of vertices assigned to $b_{i+1}$ is at least $\ell_i/t$. We let $B_{i+1}'$ be an arbitrary subset of such vertices of size exactly $\ell_i/t$.

This construction continues until layer $L_{t-1}$ where the adversary can present a single node according to the above scheme forcing $\cA$ to assign it to a new bin $b_t$. Overall, $\cA$ then uses $t$ different bins, namely, $b_1, \ldots, b_t$.

To guarantee that step (1) in the above construction always works, we need to make sure that all sets $B_j'$ are sufficiently large for this construction to reach layer $L_{t-1}$. This is possible provided that for $i \ge 2$ we have $|B_{i+1}'| \ge \sum_{j=i+1}^{t-1} \ell_j$ since each node in a layer above $i$ consumes one node from $B_i'$ (step (3) of the above construction). We also need a similar condition for layer $1$, namely, that $|B_2'|=|B_1'| \ge \sum_{j=2}^{t-1} \ell_j$. Thus, we end up with the following system of inequalities:
\begin{itemize}
    \item $\ell_{t-1} = 1$;
    \item $\ell_i/t \ge \sum_{j=i+1}^{t-1} \ell_j$ for $i \in \{2, 3, \ldots, t-2\}$;
    \item $\ell_1/(10t) \ge \sum_{j=2}^{t-1} \ell_j$.
\end{itemize}

It is straightforward to check that $\ell_{t-1}=1$,  
$\ell_i = t(t+1)^{t-i-2}$ for $i \in [2, t-2]$ and $\ell_1 = 10t(t+1)^{t-3}$  is a valid solution to the above system. Thus, a feasible construction can be carried out by the adversary. The total number of nodes in this construction is at most $20t(t+1)^{t-3}$.

Observe that the construction clearly satisfies the $1$-CB constraint, since layer $L_1$ is presented as a single connected component and every vertex in a higher layer is adjacent to a vertex in layer $L_1$.

We also note that the construction creates an \emph{almost-forest}. More specifically, call the vertex $v$ in step (2) of the above construction the \emph{parent} of the corresponding $u_j$ for $j \in [i]$ chosen in step (1). Observe that due to step (3), each vertex has at most one parent. Therefore, the only thing preventing this construction from being a forest is layer $L_1$, which can be thought of as a path going through all the leaves of the forest. This implies that the constructed graph is $3$-colorable. Consider the subgraph obtained by removing all edges in layer $L_1$ along with all vertices 
that do not have parents. Since it is a forest, it is $2$-colorable. Moreover, any valid $2$-coloring of this subgraph is also a partial valid $2$-coloring of the entire graph since we chose $B_1'$ and $B_2'$ to be non-adjacent. We can then extend this partial coloring to a complete $3$-coloring of the entire graph by using a greedy strategy. Note that uncolored vertices in $L_1$ have degree at most $2$, so a greedy coloring would use at most $3$ colors.
\end{proof}

\begin{figure}[h]
\centerline{\includegraphics[scale=0.6]{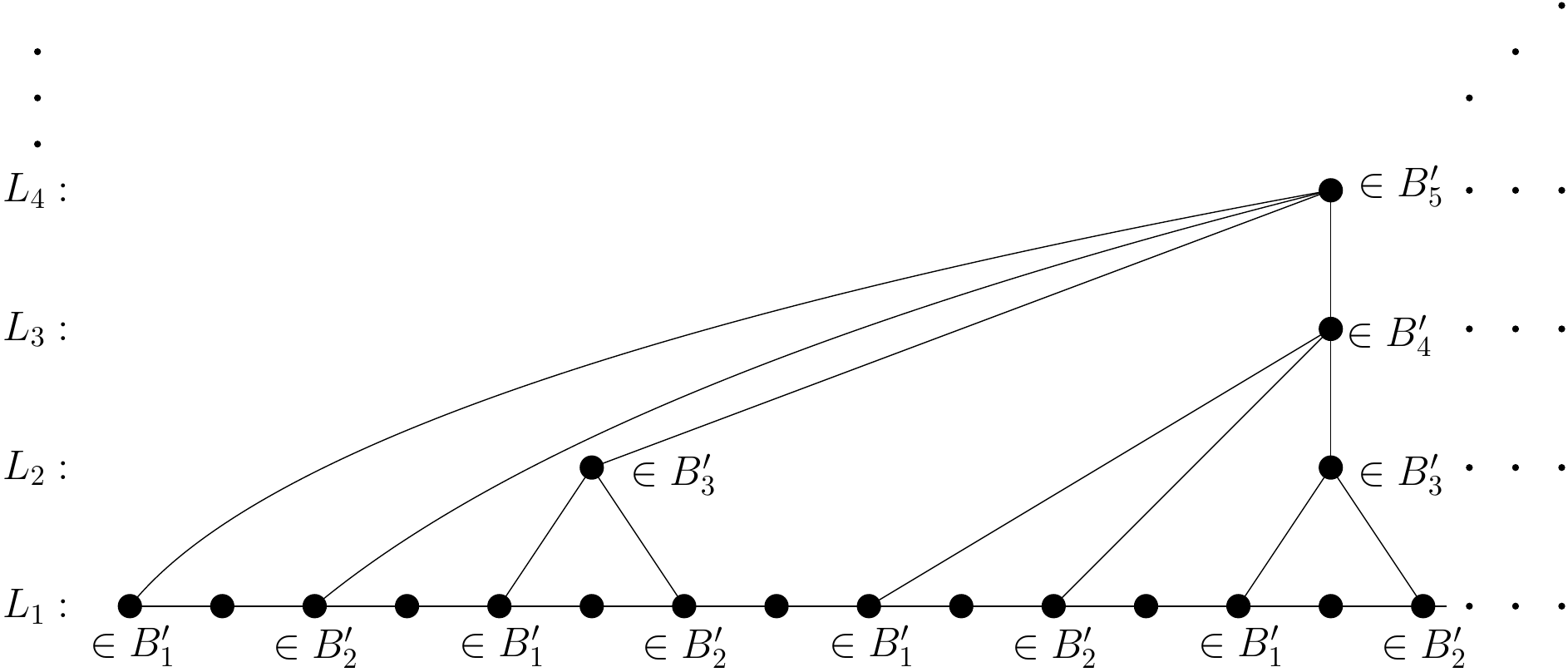}}
  \caption{Example of the construction used in Theorem~\ref{thm:general-chi}. This is a hypothetical example for some $\cA$ that assigns bins to vertices according to the figure.}
  \label{fig:lb-chi3}
\end{figure}

The above construction can be modified so that either  $\cA$ uses $t$ bins or the adversary can successively force saturated bins. For example, the adversary can extend the level $L_1$ and repeat the construction on the extended part. The adversary can do this sufficiently many times and  recolor each copy so that saturated bins are forced.

The rough estimates on sufficient lengths of layers presented in the above proof immediately lead to the following quantitative version of the result.

\begin{corollary}   
\label{cor:lowerbound}
The $1$-CB adversary can construct a $3$-colorable graph on $n$ vertices so that any online coloring algorithm uses at least $\Omega({\log n/\log\log n})$ bins.
\end{corollary}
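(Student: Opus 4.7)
The plan is to extract from the proof of Theorem~\ref{thm:general-chi} the explicit vertex count of the construction that forces $t$ bins, and then invert this bound to obtain an $\Omega(\log n/\log\log n)$ lower bound in terms of $n$. That proof already establishes that the adversary uses at most $N(t) := 20t(t+1)^{t-3}$ vertices to force any deterministic online algorithm to use $t$ distinct bins on a $3$-colorable graph presented under the $1$-CB constraint. This explicit vertex count is the only quantitative ingredient needed.

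Given $n$, I would set $t$ to be the largest integer satisfying $N(t) \le n$ and run the construction of Theorem~\ref{thm:general-chi} for this $t$. If the resulting graph has fewer than $n$ vertices, I would pad it with degree-$1$ leaves attached to existing vertices of layer $L_1$; such leaves join the unique existing connected component (preserving the $1$-CB property) and preserve $3$-colorability, while the lower bound of $t$ bins on the unpadded portion still applies. Thus we obtain a $3$-colorable $1$-CB instance on $n$ vertices on which any online algorithm uses at least $t$ bins.

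It then remains to show that $t = \Omega(\log n/\log \log n)$ for this choice. Taking logarithms gives
\[\log N(t) = \log 20 + \log t + (t - 3)\log(t+1) = O(t \log t),\]
so $N(t) \le n$ is implied by $t \log t \le c \log n$ for some absolute constant $c > 0$. Since the inverse of $x \mapsto x \log x$ is of order $y/\log y$, setting $t := \lfloor c' \log n / \log \log n \rfloor$ for a sufficiently small constant $c' > 0$ yields $t \log t \le c \log n$ for all sufficiently large $n$, confirming feasibility of this choice of $t$. The main obstacle is really just this routine asymptotic inversion together with the bookkeeping of reading the vertex count off the layered construction; no new combinatorial idea beyond Theorem~\ref{thm:general-chi} is required.
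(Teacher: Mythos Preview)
Your proposal is correct and follows exactly the route the paper intends: the corollary is stated as an immediate quantitative consequence of the vertex count $20t(t+1)^{t-3}$ from Theorem~\ref{thm:general-chi}, and you have simply filled in the asymptotic inversion and the padding-to-$n$ details that the paper leaves implicit.
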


Next, we note that the construction from Theorem~\ref{thm:general-chi} is quite robust. It can be modified in various ways to obtain similar non-competitiveness results for other classes of graphs. We first define the relevant classes. 

\begin{description}
\item[$C_k\textnormal{-}\free$:] the class of graphs that do not contain cycle of length $k$ as a (not necessarily induced) subgraph. 

\item[$d\textnormal{-}\inductive$:] the class of $d$-inductive graphs, i.e., those graphs whose vertices can be numbered so that each vertex has at most $d$ adjacent vertices among higher numbered vertices.

\item[$\planar$:] the class of planar graphs.

\item[$\treewidth\textnormal{-}k$:] the class of graphs of treewidth at most $k$.
\end{description}

We are now ready to state and prove the following corollary of the construction from Theorem~\ref{thm:general-chi}.

\begin{corollary}
\label{cor:classes_lb}
\hspace{1cm}
\begin{enumerate}
    \item $\beta(1,C_k\textnormal{-}\free) = \infty$ for every $k \ge 3$.
    \item $\beta(1,d\textnormal{-}\inductive) = \infty$ for every $d \ge 2$.
    \item $\beta(1, \planar) = \infty$.
    \item $\beta(1, \treewidth\textnormal{-}k) = \infty$ for every $k \ge 5$.
\end{enumerate}
\end{corollary}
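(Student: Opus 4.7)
My plan is to obtain all four items as adaptations of the construction in the proof of Theorem~\ref{thm:general-chi}. Recall that that construction produces a connected $3$-colorable $1$-CB input whose vertex set splits into a path $L_1$ together with layers $L_2, \ldots, L_{t-1}$, where each vertex of $L_i$ ($i \ge 2$) is attached to exactly one previously chosen vertex from each of $B_1', B_2' \subseteq L_1$ and $B_3' \subseteq L_2, \ldots, B_i' \subseteq L_{i-1}$. Two features of that construction are central: removing the $L_1$-path edges leaves a forest (since each vertex is selected as a pre-neighbor at most once), and the $L_1$-vertices used as pre-neighbors are pairwise non-adjacent in the path.

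For Part (2) I would verify that the construction is already $2$-inductive; monotonicity of $d$-inductivity in $d$ then covers every $d \ge 2$. Take the ordering: $L_1$ in path order, followed by $L_2, L_3, \ldots, L_{t-1}$ in layer order (internal order arbitrary). A vertex $v \in L_1$ has at most one later $L_1$-neighbor (its right path-neighbor) plus at most one later higher-layer neighbor (its unique user, if any), giving up-degree at most $2$. A vertex $v \in L_i$ with $i \ge 2$ has all pre-neighbors earlier, no same-layer edges, and at most one later user, giving up-degree at most $1$.

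For Part (1) I would strengthen the construction by insisting that the elements of $B_1' \cup B_2'$ lie at pairwise path-distance at least $k+1$ in $L_1$. Lengthening $L_1$ makes this feasible without disturbing the color-forcing argument of Theorem~\ref{thm:general-chi}. A case analysis on the maximal $L_1$-path runs of a purported short cycle then yields girth strictly greater than $k$: each such run has two endpoints in $B_1' \cup B_2'$ (they are incident to non-path edges and therefore have parents), so the run has length at least $k+1$, while a cycle avoiding all $L_1$-path edges would live entirely in the forest of non-path edges. Hence the resulting graph is $C_k$-free.

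For Parts (3) and (4) I would further refine the pre-neighbor choices to form a laminar family on $L_1$: whenever a higher-layer vertex $v$ has $L_1$-pre-neighbors $u_1, u_2$, the interval $[u_1, u_2]$ is either disjoint from or nested with every previously created such interval, and the same structure is extended to cross-layer edges. The adversary has enough flexibility in step (1) of Theorem~\ref{thm:general-chi} to enforce this provided $L_1$ is taken large enough. The resulting graph then embeds in the plane with $L_1$ on the outer face and each higher-layer vertex placed inside its interval, giving Part (3). For Part (4), the laminar structure admits a sliding-window tree decomposition along $L_1$ whose bags consist of the current $L_1$-vertex together with the higher-layer vertices whose intervals are active above it; by capping the nesting depth at a small constant throughout the construction, the bag size stays at most $6$, yielding treewidth at most $5$. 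The main obstacle is maintaining the non-crossing refinement of step (1) consistently across all $t-1$ layers while preserving the quantitative layer-size bounds of Theorem~\ref{thm:general-chi}, and especially keeping the nesting depth bounded uniformly in $t$ (needed for Part (4)); once this bookkeeping is carried out, Parts (3) and (4) follow from the planar embedding and the sliding-window decomposition, respectively.
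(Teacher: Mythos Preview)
Your Parts (1) and (2) are correct and essentially identical to the paper's arguments.

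For Part (3), your laminar-interval formulation is closely related to, though phrased differently from, the paper's. The paper fixes a single ``universal'' tree $T$ that contains the forest-part of the construction for \emph{every} algorithm $\cA$, takes a plane embedding of $T$, and then arranges the $B_1'/B_2'$ pattern along $L_1$ to match the clockwise order of the corresponding leaves in that embedding; step~(1) of the construction is then guided by $T$. Since the clockwise leaf order of a plane tree induces precisely a laminar family of subtree-intervals, the two viewpoints coincide. The paper's formulation has the practical advantage that the cross-layer ``bookkeeping'' you leave open is absorbed into the single choice of a plane embedding of $T$.

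For Part (4) there is a genuine gap. In your sliding-window decomposition along $L_1$, a cross-layer edge $\{v,u_j\}$ with $v\in L_i$, $u_j\in L_{j-1}$, $3\le j\le i$, is covered by some bag only if the $L_1$-intervals of $v$ and $u_j$ overlap; under laminarity this forces $u_j$'s interval to be nested inside $v$'s. But then the chain $w_{t-1}\in L_{t-1}$, $w_{t-2}\in L_{t-2}$, \ldots, $w_2\in L_2$, where each $w_{j-1}$ is the $L_{j-1}$-pre-neighbor of $w_j$, yields a strictly nested chain of intervals of depth $t-2$, and every $L_1$-vertex inside the innermost interval lies in a bag of size at least $t-1$. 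You cannot ``cap the nesting depth at a small constant'' without either breaking laminarity (and with it the planar embedding) or leaving some cross-layer edge uncovered by every bag; this is not bookkeeping but an inherent obstruction to the specific decomposition you propose. The paper avoids this entirely: once the plane embedding of Part~(3) places all of $L_1$ on the outer face, deleting $L_1$ leaves the forest part, every vertex of which is then incident to the unbounded face; hence the graph is $2$-outerplanar, and Bodlaender's theorem~\cite{Bodlaender1998} gives treewidth at most $5$ directly, with no need to exhibit a tree decomposition by hand.
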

\begin{proof}

\hspace{1cm}

\begin{enumerate}
    \item Since the construction in Theorem~\ref{thm:general-chi} is an almost-forest, the only cycles present are those using edges in layer $L_1$. By making $L_1$ longer we could insist that nodes in $B_1'$ and $B_2'$ are at least distance $k$ apart: modify the procedure for selecting nodes into $B_1'$ or $B_2'$ by picking a node from $B_1$ or $B_2$ respectively and removing all nodes at distance $k$ from it from $B_1$ and $B_2$. This modification insures that all cycles are of length greater than $k$.
    \item Observe that the construction is $2$-inductive: number vertices in the order in which they appear. Each vertex in layer $L_i$ for $i \ge 2$ has at most one neighbor among higher numbered vertices, namely, the vertex which we called the parent. A vertex in $L_1$ potentially has $2$ adjacent higher numbered vertices: at most one parent in layer $L_i$ for $i \ge 2$ and at most one neighbor in layer $L_1$ which follows it in the path.
    \item Since the construction in Theorem~\ref{thm:general-chi} is an almost-forest and the forest part is planar, we just need to make sure that the path in $L_1$ does not break planarity. We could draw a plane embedding of the forest part and order leaves clockwise. If the vertices in $L_1$ appear in the order consistent with this clockwise ordering of leaves then planarity can be maintained while adding nodes and edges from $L_1$ back into the picture. Unfortunately, the clockwise ordering of leaves in plane embedding might be inconsistent with the ordering of these leaves along the path in $L_1$. Fortunately, it is possible to adjust the construction to guarantee that the two orders are consistent.  Completely formal proof of this is rather tedious, so we give a high level description instead.
    
    First, note that there is a single tree $T$ such that for every algorithm $\cA$ the forest-part of the construction produced for $\cA$ is a subgraph of $T$, where the leaves are labelled as either $B_1'$ or $B_2'$ vertices. Second, we could consider the plane embedding of $T$ and the clockwise ordering of leaves induces a sequential pattern of inter-mixed labels $B_1'$ and $B_2'$. Third, observe that by letting the path $P$ in $L_1$ be sufficiently long and taking a subset of $B_1$ and $B_2$ appropriately, any sequential pattern of inter-mixed labels $B_1'$ and $B_2'$ can be generated along the path $P$ in $L_1$. Therefore, the adversary can always generate $B_1'$ and $B_2'$ respecting the same sequential pattern as induced by the clockwise ordering of leaves in the plane embedding of $T$. This is how the adversary generates $L_1$ in the modified construction. The adversary proceeds generating the subgraph of $T$ as before, but it uses $T$ as a guide: vertex $v$ from step (2) of the construction can be mapped to a node in $T$ and the children of $v$ in $T$ dictate which nodes $\{u_j\}$ are chosen in step (1) of the construction. An illustration is given in Figure~\ref{fig:lb-planar}. This completes the argument.
    \item Observe that the modified construction from the previous item is $2$-outerplanar since after removing the vertices in $L_1$ we are left with a graph where every vertex is adjacent to the unbounded face. Therefore by the result of Bodlaender~\cite{Bodlaender1998} the construction has treewidth at most $5$.
\end{enumerate}
\end{proof}

\begin{figure}[h]
\centerline{\includegraphics[scale=0.6]{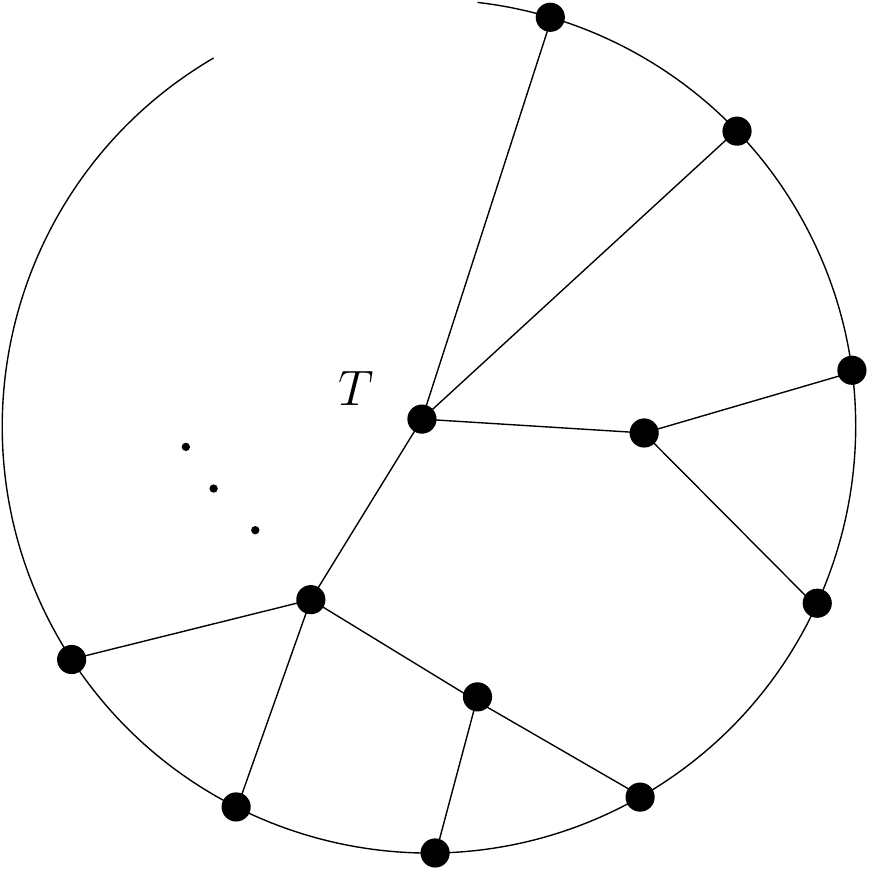}}
  \caption{Example of the construction used in parts 3 and 4 of Corollary~\ref{cor:classes_lb}. The vertices appearing on the circle are precisely the $B_1'$ and $B_2'$ subsets of $L_1$ vertices. Other $L_1$ vertices are not shown, but they can be visualized as being interspersed between them. Observe that the graph is planar and $2$-outerplanar.}
  \label{fig:lb-planar}
\end{figure}

\section{Conclusion and Open Problems}
\label{sec:conclusion}

We have introduced a new type of adversary for online graph problems and studied online coloring with respect to this adversary. This led to an improved understanding of the properties of the two widely studied online coloring algorithms $FirstFit$ and $CBIP$. Furthermore, when the adversary is $\kappa$-CB for $\kappa = O(1)$, Theorems~\ref{thm:CBIP-tree} and~\ref{thm:general-chi}  show a sharp contrast between bipartite graphs, for which the $CBIP$ uses only $O(1)$ bins, and $3$-colorable graphs for which any algorithm has to use infinitely many bins. While our work suggests many directions for future research, we find the following questions particularly intriguing:

\begin{enumerate}
\item What is $\beta(1, \treewidth\textnormal{-}k)$ for $k \in \{2, 3, 4\}$?

\item What is the performance of the algorithm in Lovasz et al.~\cite{lovasz1989line}  under the $\kappa$-CB adversary? 

\item We allow the adversary to use an unlimited number of vertices. A natural extension of our work is to study the dependence on $n$ while the adversary is $\kappa$-CB. Corollary~\ref{cor:lowerbound} is a step in that direction.   For $3$-colorable graphs, when the adversary is unconstrained,  a lower bound $\Omega(\log^2 n)$ from \cite{vishwanathan1990randomized} and an upper bound $O(n^{2/3}\log^{1/3}n)$ from \cite{kierstead1998line} are known. The upper bound from  \cite{kierstead1998line} holds for $\kappa$-CB adversary. Could Corollary \ref{cor:lowerbound} be improved to at least $\Omega(\log^2 n)$ for $3$-colorable graphs? 


\item What is the performance of randomized online coloring algorithms under the $\kappa$-CB adversary? When the adversary is unconstrained, there is a randomized algorithm \cite{vishwanathan1990randomized} that uses  $O(\sqrt{n\log n})$ bins for $3$-colorable graphs. Is it possible to improve this upper bound if the adversary is, e.g.,  $1$-CB?




\item For a graph $G$ and presentation order $\sigma$ define $\kappa(G,\sigma) = \max_i cc(G\cap \sigma([i]))$. What is the behaviour of $\kappa(G, \sigma)$ in real-world instances? As we mention in the introduction, it is expected that $\kappa(G, \sigma)$ is ``small'' for social networks. How ``small'' is it actually? What are typical values of $\kappa(G,\sigma)$? For a class of real-world instances for a particular application (such as transportation networks, social networks, or electrical networks), do $\kappa(G, \sigma)$ values follow some well-defined distribution?

\item One can study the power and limitations of the $\kappa$-CB adversary in other related models, e.g., online algorithms with advice, streaming algorithms, temporal or dynamic graphs algorithms. Interactions between various features of those models and the $\kappa$-CB constraint might lead to new algorithms or finer understanding of existing algorithms.

\item Last but definitely not least, it would be rather interesting to study other online graph problems under the $\kappa$-CB adversary.

\end{enumerate}


\bibliographystyle{plain}
\bibliography{refs}

\end{document}